\tikzstyle{Process}=[fill=none, draw=none, shape=circle, inner sep=-.5]
\tikzstyle{block}=[fill=none, draw=black, shape=rectangle, rounded corners=.5mm, inner sep=.5mm]
\tikzstyle{crt}=[-, draw=cbBlue, line width=.8pt]
\tikzstyle{ci}=[-, draw=cbRed, line width=.8pt]
\tikzstyle{thick}=[-, line width=1pt]
\tikzstyle{thicker}=[-, line width=1.2pt]
\definecolor{cbBlue}{RGB}{14,40,142}
\definecolor{cbGreen}{RGB}{22,76,100}
\definecolor{cbRed}{RGB}{136,26,88}
\colorlet{lightgrey}{gray!20!white}
\definecolor{cbRef}{RGB}{165,54,6}
\colorlet{cbDkRef}{cbRef!80!white}
\colorlet{cbDkGreen}{cbGreen!90!white}
\newcommand{\etal}{\emph{et al.}\xspace}
\newcommand{\sff}[1]{\relax\ifmmode\mathsf{#1}\else\textsf{#1}\fi}
\newcommand{\mathsc}[1]{\text{\normalfont\scshape#1}}
\newcommand{\scc}[1]{\relax\ifmmode\mathsc{#1}\else\textsc{#1}\fi}
\newcommand{\mbb}[1]{\mathbb{#1}}
\newcommand{\<}{\langle}
\renewcommand{\>}{\rangle}
\newcommand{\sepr}{\;\mbox{\large{$\mid$}}\;}
\newcommand{\redd}{\mathbin{\longrightarrow}}
\newcommand{\nredd}{{\centernot\longrightarrow}}
\newcommand{\rred}[1]{\shortrightarrow_{#1}}
\newcommand{\brred}[1]{\beta_{#1}}
\newcommand{\krred}[1]{\kappa_{#1}}
\newcommand{\fn}{\mathrm{fn}}
\newcommand{\frv}{\mathrm{frv}}
\newcommand{\an}{\mathrm{an}}
\newcommand{\live}{\mathrm{live}}
\newcommand{\ol}[1]{\overline{#1}}
\newcommand{\rott}[1]{\mathpalette\rot{#1}}
\newcommand{\rot}[2]{\rotatebox[origin=c]{180}{$#1{#2}$}}
\newcommand{\pr}{\mathrm{pr}}
\newcommand{\figref}[1]{Fig.~\labelcref{#1}}
\newcommand{\mirr}[1]{\mathpalette\mir{#1}}
\newcommand{\mir}[2]{\reflectbox{$#1{#2}$}}
\newcommand{\iddots}{\mirr{\ddots}}
\let\oprod\prod
\renewcommand{\prod}{\mathchoice{\textstyle}{}{}{}{\oprod}}
\newcommand{\puts}{\mathbin{\triangleleft}}
\renewcommand{\gets}{\mathbin{\triangleright}}
\newcommand{\call}[1]{\<#1\>}
\let\onu\nu
\renewcommand{\nu}[1]{(\mkern-1mu\bm{\onu} #1)}
\let\oprl\|
\renewcommand{\|}{\mathbin{|}}
\newcommand{\0}{\bm{0}}
\newcommand{\fwd}{\mathbin{\leftrightarrow}}
\newcommand{\subst}[1]{{\mathchoice{\scriptstyle}{\scriptstyle}{}{}{\{#1\}}}}
\newcommand{\rsubst}[1]{\{#1\}}
\newcommand{\tensor}{\ensuremath{\mathbin{\otimes}}}
\newcommand{\parr}{\mathbin{\rott{\&}}}
\newcommand{\pri}{\mathsf{o}}
\newcommand{\opri}{\kappa}
\newcommand{\1}{\bm{1}}
\newcommand{\lift}[1]{{\uparrow\mkern1mu}^{#1}}
\newtheorem{theorem}{Theorem}
\newtheorem{lemma}[theorem]{Lemma}
\newtheorem{definition}{Definition}
\newtheorem{notation}{Notation}
\newcommand{\infAx}[2]{\AxiomC{} \RightLabel{#2} \UnaryInfC{#1}}
\newcommand{\infAss}[1]{\AxiomC{#1}}
\newcommand{\infUn}[2]{\RightLabel{#2} \UnaryInfC{#1}}
\newcommand{\infDblUn}[2]{\RightLabel{#2} \doubleLine\UnaryInfC{#1}}
\newcommand{\infBin}[2]{\RightLabel{#2} \BinaryInfC{#1}}
\newcommand{\infDblBin}[2]{\RightLabel{#2} \doubleLine\BinaryInfC{#1}}
\newenvironment{wfit}{\begin{varwidth}{\textwidth}}{\end{varwidth}}
\title{Deadlock Freedom for~ \\ Asynchronous and Cyclic Process Networks%
\thanks{Research partially supported by the Dutch Research Council (NWO) under project No. 016.Vidi.189.046 (Unifying Correctness for Communicating Software).}}
\let\thetitle\@title
\author{
    Bas van den Heuvel and Jorge A.\ P\'erez
    \institute{University of Groningen, The Netherlands}
}
\begin{document}

\maketitle

\begin{abstract}
    This paper considers the challenging problem of establishing deadlock freedom for message-passing processes using behavioral type systems.
    In particular, we consider the case of processes that implement session types by communicating asynchronously in cyclic process networks.
    We present APCP, a typed process framework for deadlock freedom which supports asynchronous communication, delegation, recursion, and a general form of process composition that enables specifying cyclic process networks.
    We discuss the main decisions involved in the design of APCP and illustrate its expressiveness and flexibility using several examples.
\end{abstract}

\section{Introduction}
\label{s:intro}

Modern software systems often comprise independent components that interact by passing messages.
The $\pi$-calculus is a consolidated formalism for specifying and reasoning about message-passing processes~\cite{book/Milner89,journal/ic/MilnerPW92}.
Type systems for the $\pi$-calculus can statically enforce communication correctness.
In this context, \emph{session types} are a well-known approach, describing two-party communication protocols for channel endpoints and enforcing properties such as \emph{protocol fidelity} and \emph{deadlock freedom}.

Session type research has gained a considerable impulse after the discovery by Caires and Pfenning~\cite{conf/concur/CairesP10} and Wadler~\cite{conf/icfp/Wadler12} of Curry-Howard correspondences between session types and linear logic~\cite{journal/tcs/Girard87}.
Processes typable in type systems derived from these correspondences are inherently deadlock free.
This is because the \scc{Cut}-rule of linear logic imposes that processes in parallel must connect on exactly one pair of dual endpoints.
However, whole classes of deadlock free processes are not expressible with the restricted parallel composition and endpoint connection resulting from \scc{Cut}~\cite{conf/express/DardhaP15}.
Such classes comprise \emph{cyclic process networks} in which parallel components are connected on multiple endpoints at once.
Defining a type system for deadlock free, cyclic processes is challenging, because such processes may contain \emph{cyclic dependencies}, where components are stuck waiting for each other.

Advanced type systems that enforce deadlock freedom of cyclic process networks are due to Ko\-ba\-ya\-shi~\cite{conf/concur/Kobayashi06}, who exploits \emph{priority annotations} on types to avoid circular dependencies.
Dardha and Gay bring these insights to the realm of session type systems based on linear logic by defining Priority-based CP (PCP)~\cite{conf/fossacs/DardhaG18}.
Indeed, PCP incorporates the type annotations of Padovani's simplification of Kobayashi's type system~\cite{conf/csl/Padovani14} into Wadler's Classical Processes (CP) derived from classical linear logic~\cite{conf/icfp/Wadler12}.

In this paper, we study the effects of \emph{asynchronous communication} on type systems for deadlock free cyclic process networks.
To this end, we define \emph{Asynchronous PCP} (APCP), which combines Dardha and Gay's type annotations with DeYoung \etal's semantics for asynchronous communication~\cite{conf/csl/DeYoungCPT12}, and adds support for tail recursion.
APCP uncovers fundamental properties of type systems for asynchronous communication, and simplifies PCP's type annotations while preserving deadlock freedom results.

In \Cref{s:milner}, we motivate APCP by discussing Milner's cyclic scheduler~\cite{book/Milner89}.
\Cref{s:apcp} defines APCP's language and type system, and proves Type Preservation (\Cref{t:subjectRed}) and Deadlock Freedom (\Cref{t:closedDLFree}).
In \Cref{s:examples}, we showcase APCP by returning to Milner's cyclic scheduler and using examples inspired by Padovani~\cite{conf/csl/Padovani14} to illustrate asynchronous communication and deadlock detection. 
\Cref{s:concl} discusses related work and draws conclusions.

\section{Motivating Example: Milner's Cyclic Scheduler}
\label{s:milner}

We motivate by example the development of APCP, our type system for deadlock freedom in asynchronous, cyclic message-passing processes.
We consider Milner's cyclic scheduler~\cite{book/Milner89}, which crucially relies on asynchrony and recursion.
This example is inspired by Dardha and Gay~\cite{conf/fossacs/DardhaG18}, who use PCP to type a synchronous, non-recursive version of the scheduler.

\newsavebox{\boxCyclic}
\newlength{\wCyclic}
\savebox{\boxCyclic}{\tikzfig{cyclic}}
\settowidth{\wCyclic}{\usebox{\boxCyclic}}

\begin{wrapfigure}{R}{\wCyclic+12mm}
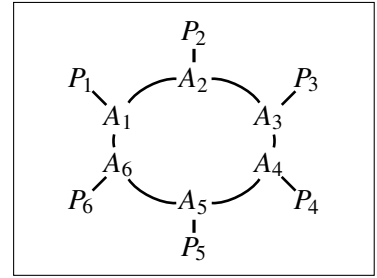

    \vspace{-1.5em}
    \begin{mdframed}
        \ctikzfig{cyclic}
        \vspace{-.5em}
    \end{mdframed}
    \vspace{-0.5em}
    \caption{Milner's cyclic sche\-duler with 6 workers. $\protect\phantom{xxxxxxx}$ Lines denote channels connecting processes.}
    \label{f:milner}
\end{wrapfigure}

The system consists of $n \geq 1$ worker processes $P_i$ (the workers, for short), each attached to a partial scheduler $A_i$.
The partial schedulers connect to each other in a ring structure, together forming the cyclic scheduler.
The scheduler then lets the workers perform their tasks in rounds, each new round triggered by the leading partial scheduler $A_1$ (the \emph{leader}) once each worker finishes their previous task.
We refer to the non-leading partial schedulers $A_{i+1}$ for $1 \leq i < n$ as the \emph{followers}.

Each partial scheduler $A_i$ has a channel endpoint $a_i$ to connect with the worker $P_i$'s channel endpoint $b_i$.
The leader $A_1$ has an endpoint $c_n$ to connect with $A_n$ and an endpoint $d_1$ to connect with $A_2$ (or with $A_1$ if $n=1$; we further elude this case for brevity).
Each follower $A_{i+1}$ has an endpoint $c_i$ to connect with $A_i$ and an endpoint $d_{i+1}$ to connect with $A_{i+2}$ (or with $A_1$ if $i+1=n$; we also elude this case).

In each round of the scheduler, each follower $A_{i+1}$ awaits a start signal from $A_i$, and then asynchronously signals $P_{i+1}$ and $A_{i+2}$ to start.
After awaiting acknowledgment from $P_{i+1}$ and a next round signal from $A_i$, the follower then signals next round to $A_{i+2}$.
The leader $A_1$, responsible for starting each round of tasks, signals $A_2$ and $P_1$ to start, and, after awaiting acknowledgment from $P_1$, signals next round to $A_2$.
Then, the leader awaits $A_n$'s start and next round signals.

It is crucial that $A_1$ does not await $A_n$'s start signal before starting $P_1$, as the leader would otherwise not be able to initiate rounds of tasks.
Asynchrony thus plays a central role here: because $A_n$'s start signal is non-blocking, it can start $P_n$ before $A_1$ has received the start signal.
Of course, $A_1$ does not need to await $A_n$'s start and next round signals to make sure that every partial scheduler is ready to start the next round.

Let us specify the partial schedulers formally:
\begin{align*}
    A_1 &:= \mu X(a_1,c_n,d_1); d_1 \puts \sff{start} \cdot a_1 \puts \sff{start} \cdot a_1 \gets \sff{ack}; d_1 \puts \sff{next} \cdot c_n \gets \sff{start}; c_n \gets \sff{next}; X\call{a_1,c_n,d_1}
    \\
    A_{i+1} &:= \mu X(a_{i+1},c_i,d_{i+1}); \begin{array}[t]{l}
        c_i \gets \sff{start}; a_{i+1} \puts \sff{start} \cdot d_{i+1} \puts \sff{start} \cdot a_{i+1} \gets \sff{ack};
        \\
        c_i \gets \sff{next}; d_{i+1} \puts \sff{next} \cdot X\call{a_{i+1},c_i,d_{i+1}}
    \end{array}
    \qquad \forall 1 \leq i < n
\end{align*}
The syntax `$\mu X(\tilde{x}); P$' denotes a recursive loop where $P$ has access to the endpoints in $\tilde{x}$ and $P$ may contain recursive calls `$X\call{\tilde{y}}$' where the endpoints in $\tilde{y}$ are assigned to $\tilde{x}$ in the next round of the loop.
The syntax `$x \puts \ell$' denotes the output of label $\ell$ on $x$, and `$x \gets \ell$' denotes the input of label $\ell$ on $x$.
Outputs are non-blocking, denoted `$\cdot$', whereas inputs are blocking, denoted `$;$'.
For example, process $x \puts \ell \cdot y \gets \ell'; P$ may receive $\ell'$ on $y$ and continue as $x \puts \ell \cdot P$.

Leaving the workers unspecified, we formally specify the complete scheduler as a ring of partial schedulers connected to workers:
\[
    Sched_n := \nu{c_i d_i}_{1 \leq i \leq n}(\prod_{1 \leq i \leq n} \nu{a_i b_i}(A_i \| P_i))
\]
The syntax `$\nu{x y}P$' denotes the connection of endpoints $x$ and $y$ in $P$, and `$\prod_{i \in I} P_i$' and `$P \| Q$' denote parallel composition.
\Cref{f:milner} illustrates $Sched_6$, the scheduler for six workers.

We return to this example in \Cref{s:examples}, where we type check the scheduler using APCP to show that it is deadlock free.

\section{APCP: Asynchronous Priority-based Classical Processes}
\label{s:apcp}

In this section, we define APCP, a linear type system for $\pi$-calculus processes that communicate asynchronously (i.e., the output of messages is non-blocking) on connected channel endpoints.
In APCP, processes may be recursive and cyclically connected.
Our type system assigns to endpoints types that specify two-party protocols, in the style of binary session types~\cite{conf/concur/Honda93}.

APCP combines the salient features of Dardha and Gay's Priority-based Classical Processes (PCP)~\cite{conf/fossacs/DardhaG18} with DeYoung \etal's semantics for asynchronous communication~\cite{conf/csl/DeYoungCPT12}, both works inspired by Curry-Howard correspondences between linear logic and session types~\cite{conf/concur/CairesP10,conf/icfp/Wadler12}.
Recursion---not present in the works by Dardha and Gay and DeYoung \etal---is an orthogonal feature, whose syntax is inspired by the work of Toninho \etal~\cite{conf/tgc/ToninhoCP14}.

As in PCP, types in APCP rely on \emph{priority} annotations, which enable cyclic connections by ruling out circular dependencies between sessions.
A key insight of our work is that asynchrony induces significant improvements in priority management: the non-blocking outputs of APCP do not need priority checks, whereas PCP's outputs are blocking and thus require priority checks.

Properties of well-typed APCP processes are \emph{type preservation} (\Cref{t:subjectRed}) and \emph{deadlock freedom} (\Cref{t:closedDLFree}).
This includes cyclically connected processes, which priority-annotated types guarantee free from circular dependencies that may cause deadlock.

\subsection{The Process Language}
\label{ss:procs}

\begin{figure}[t]
    \begin{mdframed}
        Process syntax:
        \begin{align*}
            P,Q ::=
            &~ x[y,z]
            & \text{output}
            & ~~~ \sepr ~~
              x(y,z); P
            & \text{input}
            \\[-3pt]
            \sepr\!
            &~ x[z] \triangleleft i
            & \text{selection}
            & ~~~ \sepr ~~
              x(z) \triangleright \{i: P_i\}_{i \in I}
            & \text{branching}
            & ~~~ \sepr ~~
              \nu{x y}P
            & \text{restriction}
            \\[-3pt]
            \sepr\!
            &~ (P \| Q)
            & \text{parallel}
            & ~~~ \sepr ~~
              \0
            & \text{inaction}
            & ~~~ \sepr ~~
              x \fwd y
            & \text{forwarder}
            \\[-3pt]
            \sepr\!
            &~ \mu X(\tilde{x}); P
            & \text{recursive loop}
            & ~~~ \sepr ~~
              X\call{\tilde{x}}
            & \text{recursive call}
        \end{align*}

        \vspace{-1em}
        \hbox to \textwidth{\leaders\hbox to 3pt{\hss . \hss}\hfil}

        \smallskip
        Structural congruence:
        \begin{align*}
            P \equiv_\alpha P' \implies {}
            &
            &
            P
            &\equiv P'
            &
            x \fwd y
            &\equiv y \fwd x
            \\
            &
            &
            P \| Q
            &\equiv Q \| P
            &
            \nu{x y} x \fwd y
            &\equiv \0
            \\
            &
            &
            P \| \0
            &\equiv P
            &
            P \| (Q \| R)
            &\equiv (P \| Q) \| R
            \\
            x,y \notin \fn(P) \implies {}
            &
            &
            P \| \nu{x y}Q
            &\equiv \nu{x y}(P \| Q)
            &
            \nu{x y}\0
            &\equiv \0
            \\
            |\tilde{x}| = |\tilde{y}| \implies {}
            &
            &
            \mu X(\tilde{x}); P
            &\equiv P \rsubst{\mu X(\tilde{y}); P \subst{\tilde{y}/\tilde{x}} / X\call{\tilde{y}}}
            &
            \nu{x y}P
            &\equiv \nu{y x}P
            \\
            &
            &
            &
            &
            \nu{x y}\nu{z w} P
            &\equiv \nu{z w}\nu{x y} P
        \end{align*}

        \vspace{-1em}
        \hbox to \textwidth{\leaders\hbox to 3pt{\hss . \hss}\hfil}

        \smallskip
        Reduction:
        \begin{align*}
            & \brred{\scc{Id}}
            &
            z,y \neq x \implies {}
            &
            &
            \nu{y z}( x \fwd y \| P )
            &\redd P \subst{x/z}
            \\
            & \brred{\tensor\parr}
            &
            &
            &
            \nu{x y}( x[a,b] \| y(v,z); P )
            & \redd P \subst{a/v, b/z}
            \\
            & \brred{\oplus\&}
            &
            j \in I \implies {}
            &
            &
            \nu{x y}( x[b] \triangleleft j \| y(z) \triangleright \{i: P_i\}_{i \in I} )
            &\redd P_j \subst{b/z}
            \\[8pt]
            & \krred{\parr}
            &
            x \notin \tilde{v}, \tilde{w} \implies {}
            &
            &
            \nu{\tilde{v} \tilde{w}}(x(y,z); P \| Q)
            &\redd x(y,z); \nu{\tilde{v} \tilde{w}}(P \| Q)
            \\
            & \krred{\&}
            &
            x \notin \tilde{v}, \tilde{w} \implies {}
            &
            & \nu{\tilde{v} \tilde{w}}(x(z) \triangleright \{i: P_i\}_{i \in I} \| Q)
            & \redd x(z) \triangleright \{i: \nu{\tilde{v} \tilde{w}}(P_i \| Q)\}_{i \in I}
        \end{align*}

        \noindent
        \mbox{}\hfill%
        \begin{wfit}
            \begin{prooftree}
                \infAss{
                    $(P \equiv P') \wedge (P' \redd Q') \wedge (Q' \equiv Q)$
                }
                \infUn{
                    $P \redd Q$
                }{$\rred{\equiv}$}
            \end{prooftree}
        \end{wfit}%
        \hfill%
        \begin{wfit}
            \begin{prooftree}
                \infAss{
                    $\raisebox{9pt}{} P \redd Q$
                }
                \infUn{
                    $\nu{x y} P \redd \nu{x y} Q$
                }{$\rred{\onu}$}
            \end{prooftree}
        \end{wfit}%
        \hfill%
        \begin{wfit}
            \begin{prooftree}
                \infAss{
                    $\raisebox{9pt}{} P \redd Q$
                }
                \infUn{
                    $P \| R \redd Q \| R$
                }{$\rred{\|}$}
            \end{prooftree}
        \end{wfit}%
        \hfill\mbox{}
    \end{mdframed}

    \caption{Definition of APCP's process language.}
    \label{f:procdef}
\end{figure}

We consider an asynchronous $\pi$-calculus~\cite{conf/ecoop/HondaT91,report/Boudol92}.
We write $x, y, z, \ldots$ to denote (channel) \emph{endpoints} (also known as \emph{names}), and write $\tilde{x}, \tilde{y}, \tilde{z}, \ldots$ to denote sequences of endpoints.
Also, we write $i, j, k, \ldots$ to denote \emph{labels} for choices and $I, J, K, \ldots$ to denote sets of labels.
We write $X, Y, \ldots$ to denote \emph{recursion variables}, and $P,Q, \ldots$ to denote processes.

\Cref{f:procdef} (top) gives the syntax of processes.
The output action `$x[y,z]$' sends a message $y$ (an endpoint) and a continuation endpoint $z$ along $x$.
The input prefix `$x(y,z); P$' blocks until a message and a continuation endpoint are received on $x$ (referred to in $P$ as the placeholders $y$ and $z$, respectively), binding $y$ and $z$ in $P$.
The selection action `$x[z] \puts i$' sends a label $i$ and a continuation endpoint $z$ along $x$.
The branching prefix `$x(z) \gets \{i: P_i\}_{i \in I}$' blocks until it receives a label $i \in I$ and a continuation endpoint (reffered to in $P_i$ as the placeholder $z$) on $x$, binding $z$ in each $P_i$.
Restriction `$\nu{x y} P$' binds $x$ and $y$ in $P$, thus declaring them as the two endpoints of the same channel and enabling communication, as in Vasconcelos~\cite{journal/ic/Vasconcelos12}.
The process `$(P \| Q)$' denotes the parallel composition of $P$ and $Q$.
The process `$\0$' denotes inaction.
The forwarder process `$x \fwd y$' is a primitive copycat process that links together $x$ and $y$.
The prefix `$\mu X(\tilde{x}); P$' defines a recursive loop, binding occurrences of $X$ in $P$; the endpoints $\tilde{x}$ form a context for $P$.
The recursive call `$X\call{\tilde{x}}$' loops to its corresponding $\mu X$, providing the endpoints $\tilde{x}$ as context.
We only consider contractive recursion, disallowing processes with subexpressions of the form `$\mu X_1(\tilde{x}); \ldots; \mu X_n(\tilde{x}); X_1\call{\tilde{x}}$'.

Endpoints and recursion variables are free unless otherwise stated (i.e., unless they are bound somehow).
We write `$\fn(P)$' and `$\frv(P)$' for the sets of free names and free recursion variables of $P$, respectively.
Also, we write `$P \subst{x/y}$' to denote the capture-avoiding substitution of the free occurrences of $y$ in $P$ for $x$.
The notation `$P\rsubst{\mu X(\tilde{y}); P' / X\call{\tilde{y}}}$' denotes the substitution of occurrences of recursive calls `$X\call{\tilde{y}}$' in $P$ with the recursive loop `$\mu X(\tilde{y}); P'$', which we call \emph{unfolding} recursion.
We write sequences of substitutions `$P \subst{x_1/y_1} \ldots \subst{x_n/y_n}$' as `$P \subst{x_1/y_1, \ldots, x_n/y_n}$'.

Except for asynchrony and recursion, there are minor differences with respect to the languages of Dardha and Gay~\cite{conf/fossacs/DardhaG18} and DeYoung \etal~\cite{conf/csl/DeYoungCPT12}.
Unlike Dardha and Gay's, our syntax does not include empty input and output prefixes that explicitly close channels; this simplifies the type system.
We also do not include the operator for replicated servers, denoted `${!}x(y); P$', which is present in the works by both Dardha and Gay and DeYoung \etal
Although replication can be handled without difficulties, we omit it here; we prefer focusing on recursion, because it fits well with the examples we consider.
We discuss further these omitted constructs in \Cref{ss:exrules}.

\paragraph{Simplifying Notation}

In an output `$x[y,z]$', both $y$ and $z$ are free; they can be bound to a continuation process using parallel composition and restriction, as in $\nu{y a}\nu{z b}(x[y,z] \| P_{a,b})$.
The same applies to selection `$x[z] \puts i$'.
We introduce useful notations that elide the restrictions and continuation endpoints:

\begin{notation}[Derivable Actions and Prefixes]\label{n:sugar}
    We use the following syntactic sugar:
    \begin{align*}
        \ol{x}[y] \cdot P
        & := \nu{y a}\nu{z b}(x[a,b] \| P\subst{z/x})
        &
        \ol{x} \puts \ell \cdot P
        & := \nu{z b}(x[b] \puts \ell \| P\subst{z/x})
        \\
        x(y); P
        & := x(y,z); P\subst{z/x}
        &
        x \gets \{i:P_i\}_{i \in I}
        & := x(z) \gets \{i:P_i\subst{z/x}\}_{i \in I}
    \end{align*}
\end{notation}

\noindent
Note the use of `$\,\cdot\,$' instead of~`$\,;\,$' in output and selection to stress that they are non-blocking.

\paragraph{Operational Semantics}

We define a reduction relation for processes ($P \redd Q$) that formalizes how complementary actions on connected endpoints may synchronize.
As usual for $\pi$-calculi, reduction relies on \emph{structural congruence} ($P \equiv Q$), which  equates the behavior of processes with minor syntactic differences; it is the smallest congruence relation satisfying the axioms in \Cref{f:procdef} (middle).

Structural congruence defines the following properties of our process language.
Processes are equivalent up to $\alpha$-equivalence.
Parallel composition is associative and commutative, with unit `$\0$'.
The forwarder process is symmetric, and equivalent to inaction if both endpoints are bound together through restriction.
A parallel process may be moved into or out of a restriction as long as the bound channels do not appear free in the moved process: this is \emph{scope inclusion} and \emph{scope extrusion}, respectively.
Restrictions on inactive processes may be dropped, and the order of endpoints in restrictions and of consecutive restrictions does not matter.
Finally, a recursive loop is equivalent to its unfolding, replacing any recursive calls with copies of the recursive loop, where the call's endpoints are pairwise substituted for the contextual endpoints of the loop (this is \emph{equi-recursion}; see, e.g., Pierce~\cite{book/Pierce02}).

We can now define our reduction relation.
Besides synchronizations, reduction includes \emph{commuting conversions}, which allow pulling prefixes on free channels out of restrictions; they are not necessary for deadlock freedom, but they are usually presented in Curry-Howard interpretations of linear logic~\cite{conf/concur/CairesP10,conf/icfp/Wadler12,conf/fossacs/DardhaG18,conf/csl/DeYoungCPT12}.
We define the reduction relation `$P \redd Q\mkern2mu$' by the axioms and closure rules in \Cref{f:procdef} (bottom).
Axioms labeled `$\beta$' are \emph{synchronizations} and those labeled `$\kappa$' are  {commuting conversions}.
We write `$\redd_{\beta}$' for reductions derived from $\beta$-axioms, and `$\redd^\ast$' for the reflexive, transitive closure of `$\redd$'.

Rule $\beta_{\scc{Id}}$ implements the forwarder as a substitution.
Rule $\beta_{\tensor \parr}$ synchronizes an output and an input on connected endpoints and substitutes the message and continuation endpoint.
Rule $\beta_{\oplus \&}$ synchronizes a selection and a branch:
the received label determines the continuation process, substituting the continuation endpoint appropriately.
Rule $\kappa_{\parr}$ (resp.\ $\kappa_{\&}$) pulls an input (resp.\ a branching) prefix on free channels out of enclosing restrictions.
Rules $\rightarrow_\equiv$, $\rightarrow_\onu$, and $\rightarrow_{\|}$ close reduction under structural congruence, restriction, and parallel composition, respectively.

Notice how output and selection actions send free names.
This is different from the works by Dardha and Gay~\cite{conf/fossacs/DardhaG18} and DeYoung \etal~\cite{conf/csl/DeYoungCPT12}, where, following an internal mobility discipline~\cite{journal/tcs/Boreale98}, communication involves bound names only.
As we show in the next subsection, this kind of \emph{bound output} is derivable (cf.\ \Cref{t:admissible}).

\subsection{The Type System}
\label{ss:typesys}

APCP types processes by assigning binary session types to channel endpoints.
Following Curry-Howard interpretations, we present session types as linear logic propositions (cf., e.g., Wadler~\cite{conf/icfp/Wadler12}, Caires and Pfenning~\cite{conf/esop/CairesP17}, and Dardha and Gay~\cite{conf/fossacs/DardhaG18}).
We extend these propositions with recursion and \emph{priority} annotations on connectives.
Intuitively, actions typed with lower priority should be performed before those with higher priority.
We write $\pri, \opri, \pi, \rho, \ldots$ to denote priorities, and `$\omega$' to denote the ultimate priority that is greater than all other priorities  and cannot be increased further.
That is, $\forall t \in \mbb{N}.~\omega > t$ and $\forall t \in \mbb{N}.~\omega + t = \omega$.

\emph{Duality}, the cornerstone of session types and linear logic, ensures that the two endpoints of a channel have matching actions.
Furthermore, dual types must have matching priority annotations.
The following inductive definition of duality suffices for our tail-recursive types (cf.\ Gay \etal~\cite{conf/places/GayTV20}).

\begin{definition}[Session Types and Duality]\label{d:props}
    The following grammar defines the syntax of \emph{session types} $A,B$, followed by the dual $\ol{A},\ol{B}$ of each type.
    Let $\pri \in \mbb{N} \cup \{\omega\}$.
    \begin{align*}
        A,B
        &::= A \tensor^\pri B \sepr A \parr^\pri B \sepr {\oplus}^\pri \{i: A_i\}_{i \in I} \sepr \&^\pri \{i: A_i\}_{i \in I} \sepr \bullet \sepr \mu X.A \sepr X
        \\
        \ol{A},\ol{B}
        &::= \ol{A} \parr^\pri \ol{B} \sepr \ol{A} \tensor^\pri \ol{B} \sepr \&^\pri \{i: \ol{A_i}\}_{i \in I} \sepr {\oplus}^\pri \{i: \ol{A_i}\}_{i \in I} \sepr \bullet \sepr \mu X.\ol{A} \sepr X
    \end{align*}
\end{definition}

\noindent
An endpoint of type `$A \tensor^\pri B$' (resp.\ `$A \parr^\pri B$') first outputs (resp.\ inputs) an endpoint of type $A$ and then behaves as $B$.
An endpoint of type `$\&^\pri \{i: A_i\}_{i \in I}$' offers a choice: after receiving a label $i \in I$, the endpoint behaves as $A_i$.
An endpoint of type `$\oplus^\pri \{i: A_i\}_{i \in I}$' selects a label $i \in I$ and then behaves as $A_i$.
An endpoint of type `$\bullet$' is closed; it does not require a priority, as closed endpoints do not exhibit behavior and thus are non-blocking.
We define `$\bullet$' as a single, self-dual type for closed endpoints, following Caires~\cite{report/Caires14}: the units `$\bot$' and `$\1$' of linear logic (used by, e.g., Caires and Pfenning~\cite{conf/concur/CairesP10} and Dardha and Gay \cite{conf/fossacs/DardhaG18} for session closing) are interchangeable in the absence of explicit closing.

Type `$\mu X.A$' denotes a recursive type, in which $A$ may contain occurrences of the recursion variable `$X$'.
As customary, `$\mu$' is a binder: it induces the standard notions of $\alpha$-equivalence, substitution (denoted `$A\subst{B/X}$'), and free recursion variables (denoted `$\frv(A)$').
We work with tail-recursive, contractive types, disallowing types of the form `$\mu X_1.\ldots.\mu X_n.X_1$'.
We adopt an equi-recursive view: a recursive type is equal to its unfolding.
We postpone formalizing the unfolding of recursive types, as it requires additional definitions to ensure consistency of priorities upon unfolding.

The priority of a type is determined by the priority of the type's outermost connective:

\begin{definition}[Priorities]\label{d:priority}
    For session type $A$, `$\pr(A)$' denotes its \emph{priority}:
    \begin{align*}
        \pr(A \tensor^\pri B)
        := \pr(A \parr^\pri B)
        &:= \pri
        &
        \pr(\mu X.A)
        &:= \pr(A)
        \\
        \pr(\oplus^\pri\{i:A_i\}_{i \in I})
        := \pr(\&^\pri\{i:A_i\}_{i \in I})
        &:= \pri
        &
        \pr(\bullet)
        := \pr(X)
        &:= \omega
    \end{align*}
\end{definition}

\noindent
The priority of `$\bullet$' and `$X$' is $\omega$: they denote ``final'', non-blocking actions of protocols.
Although `$\tensor$' and `$\oplus$' also denote non-blocking actions, their priority is not constant:
duality ensures that the priority for `$\tensor$' (resp.\ `$\oplus$') matches the priority of a corresponding `$\parr$' (resp.\ `$\&$'), which  denotes a blocking action.

Having defined the priority of types, we now turn to formalizing the unfolding of recursive types.
Recall the intuition that actions typed with lower priority should be performed before those with higher priority.
Based on this rationale, we observe that unfolding should increase the priorities of the unfolded type.
This is because the actions related to the unfolded recursion should be performed \emph{after} the prefix.
The following definition \emph{lifts} priorities in types:

\begin{definition}[Lift]\label{d:lift}
    For proposition $A$ and $t \in \mbb{N}$, we define `$\lift{t}A\mkern-2mu$' as the \emph{lift} operation:
    \begin{align*}
        \lift{t}(A \tensor^\pri B)
        &:= (\lift{t}A) \tensor^{\pri+t} (\lift{t}B)
        &
        \lift{t}(\oplus^\pri \{i: A_i\}_{i \in I})
        &:= \oplus^{\pri+t} \{i: \lift{t}A_i\}_{i \in I}
        &
        \lift{t}\bullet
        &:= \bullet
        \\
        \lift{t}(A \parr^\pri B)
        &:= (\lift{t}A) \parr^{\pri+t} (\lift{t}B)
        &
        \lift{t}(\&^\pri \{i: A_i\}_{i \in I})
        &:= \&^{\pri+t} \{i: \lift{t}A_i\}_{i \in I}
        \\
        \lift{t}(\mu X.A)
        &:= \mu X.\lift{t}(A)
        &
        \lift{t}X
        &:= X
    \end{align*}
\end{definition}

\noindent
Henceforth, the recursive type `$\mu X.A$' and its unfolding `$A\subst{\lift{t} \mu X.A/X}$' denote the same type, where the lift $t \in \mbb{N}$ of the unfolded recursive calls depends on the context in which the type appears.

\paragraph{Typing Rules}

The typing rules of APCP ensure that actions with lower priority are performed before those with higher priority (cf.\ Dardha and Gay~\cite{conf/fossacs/DardhaG18}).
To this end, they enforce the following laws:
\begin{enumerate}
    \item
        an action with priority $\pri$ must be prefixed only by inputs and branches with priority strictly smaller than $\pri$---this law does not hold for output and selection, as they are not prefixes;

    \item
        dual actions leading to synchronizations must have equal priorities (cf.\ Def.\ \labelcref{d:props}).
\end{enumerate}
Judgments are of the form `$P \vdash \Omega; \Gamma$', where $P$ is a process, $\Gamma$ is a  context that assigns types to channels (`$x{:}A$'), and $\Omega$ is a context that assigns natural numbers to recursion variables (`$X{:}n$').
The intuition behind the latter context is that it ensures the amount of context endpoints to concur between recursive definitions and calls.
Both contexts $\Gamma$ and $\Omega$ obey \emph{exchange}: assignments may be silently reordered.
$\Gamma$ is \emph{linear}, disallowing \emph{weakening} (i.e., all assignments must be used) and \emph{contraction} (i.e., assignments may not be duplicated).
$\Omega$ allows weakening and contraction, because a recursive definition does not necessarily require a recursive call although it may be called more than once.
The empty context is written `$\emptyset$'.
We write `$\pr(\Gamma)$' to denote the least priority of all types in $\Gamma$.
Notation `${(x_i{:}A_i)}_{i \in I}$' denotes indexing of assignments by $I$.
We write `$\lift{t} \Gamma$' to denote the component-wise extension of lift to typing contexts.

\begin{figure}[t]
    \begin{mdframed}
        {
            \mbox{}\hfill%
            \begin{wfit}
                \begin{prooftree}
                    \infAss{
                        $\raisebox{2.0ex}{}$
                    }
                    \infUn{
                        $\0 \vdash \Omega; \emptyset$
                    }{\scc{Empty}}
                \end{prooftree}
            \end{wfit}%
            \hfill%
            \begin{wfit}
                \begin{prooftree}
                    \infAss{
                        $P \vdash \Omega; \Gamma$
                    }
                    \infUn{
                        $P \vdash \Omega; \Gamma, x{:}\bullet$
                    }{$\bullet$}
                \end{prooftree}
            \end{wfit}%
            \hfill%
            \begin{wfit}
                \begin{prooftree}
                    \infAss{
                        $\raisebox{2.5ex}{}$
                    }
                    \infUn{
                        $x \fwd y \vdash \Omega; x{:}\ol{A}, y{:}A$
                    }{\scc{Id}}
                \end{prooftree}
            \end{wfit}%
            \hfill\mbox{}

            \smallskip
            \mbox{}\hfill%
            \begin{wfit}
                \begin{prooftree}
                    \infAss{
                        $P \vdash \Omega; \Gamma$
                        $\raisebox{1.8ex}{}$
                    }
                    \infAss{
                        $Q \vdash \Omega; \Delta$
                    }
                    \infBin{
                        $P \| Q \vdash \Omega; \Gamma, \Delta$
                    }{\scc{Mix}}
                \end{prooftree}
            \end{wfit}%
            \hfill%
            \begin{wfit}
                \begin{prooftree}
                    \infAss{
                        $P \vdash \Omega; \Gamma, x{:}A, y{:}\ol{A}$
                    }
                    \infUn{
                        $\nu{x y} P \vdash \Omega; \Gamma$
                    }{\scc{Cycle}}
                \end{prooftree}
            \end{wfit}%
            \hfill\mbox{}

            \smallskip
            \mbox{}\hfill%
            \begin{wfit}
                \begin{prooftree}
                    \infAss{
                        $\raisebox{2.3ex}{}$
                    }
                    \infUn{
                        $x[y,z] \vdash \Omega; x{:}A \tensor^\pri B, y{:}\ol{A}, z{:}\ol{B}$
                    }{$\tensor$}
                \end{prooftree}
            \end{wfit}%
            \hfill%
            \begin{wfit}
                \begin{prooftree}
                    \infAss{
                        $P \vdash \Omega; \Gamma, y{:}A, z{:}B$
                    }
                    \infAss{
                        $\pri < \pr(\Gamma)$
                    }
                    \infBin{
                        $x(y, z); P \vdash \Omega; \Gamma, x{:}A \parr^\pri B$
                    }{$\parr$}
                \end{prooftree}
            \end{wfit}%
            \hfill\mbox{}

            \smallskip
            \mbox{}\hfill%
            \begin{wfit}
                \begin{prooftree}
                    \infAss{
                        $j \in I$
                        $\raisebox{1.8ex}{}$
                    }
                    \infUn{
                        $x[z] \triangleleft j \vdash \Omega; x{:}{\oplus}^\pri\{i: A_i\}_{i \in I}, z{:}\ol{A_j}$
                    }{$\oplus$}
                \end{prooftree}
            \end{wfit}%
            \hfill%
            \begin{wfit}
                \begin{prooftree}
                    \infAss{
                        $\forall i \in I.~ P_i \vdash \Omega; \Gamma, z{:}A_i$
                    }
                    \infAss{
                        $\pri < \pr(\Gamma)$
                    }
                    \infBin{
                        $x(z) \triangleright \{i: P_i\}_{i \in I} \vdash \Omega; \Gamma, x{:}\&^\pri\{i: A_i\}_{i \in I}$
                    }{$\&$}
                \end{prooftree}
            \end{wfit}%
            \hfill\mbox{}

            \smallskip
            \mbox{}\hfill%
            \begin{wfit}
                \begin{prooftree}
                    \infAss{
                        $P \vdash \Omega, X{:}|I|; {(x_i{:}A_i)}_{i \in I}$
                    }
                    \infAss{
                        $\forall i \in I.~ A_i \neq X$
                    }
                    \infBin{
                        $\mu X({(x_i)}_{i \in I}); P \vdash \Omega; {(x_i{:}\mu X.A_i)}_{i \in I}$
                    }{\scc{Rec}}
                \end{prooftree}
            \end{wfit}%
            \hfill%
            \begin{wfit}
                \begin{prooftree}
                    \infAss{
                        $\raisebox{2.5ex}{}$
                    }
                    \infUn{
                        $X\call{{(x_i)}_{i \in I}} \vdash \Omega, X{:}|I|; {(x_i{:}X)}_{i \in I}$
                    }{\scc{Var}}
                \end{prooftree}
            \end{wfit}%
            \hfill\mbox{}

            \medskip
            \hbox to \textwidth{\leaders\hbox to 3pt{\hss . \hss}\hfil}

            \medskip
            \mbox{}\hfill%
            \begin{wfit}
                \begin{prooftree}
                    \infAss{
                        $P \vdash \Omega; \Gamma, y{:}A, x{:}B$
                    }
                    \infUn{
                        $\ol{x}[y] \cdot P \vdash \Omega; \Gamma, x{:}A \tensor^\pri B$
                    }{$\tensor^\star$}
                \end{prooftree}
            \end{wfit}%
            \hfill%
            \begin{wfit}
                \begin{prooftree}
                    \infAss{
                        $P \vdash \Omega; \Gamma, x{:}A_j$
                    }
                    \infAss{
                        $j \in I$
                    }
                    \infBin{
                        $\ol{x} \triangleleft j \cdot P \vdash \Omega; \Gamma, x{:}{\oplus}^\pri\{i: A_i\}_{i \in I}$
                    }{$\oplus^\star$}
                \end{prooftree}
            \end{wfit}%
            \hfill%
            \begin{wfit}
                \begin{prooftree}
                    \infAss{
                        $P \vdash \Omega; \Gamma$
                    }
                    \infAss{
                        $t \in \mbb{N}$
                    }
                    \infBin{
                        $P \vdash \Omega; \lift{t} \Gamma$
                    }{\scc{Lift}}
                \end{prooftree}
            \end{wfit}%
            \hfill\mbox{}%
        }%
    \end{mdframed}

    \caption{The typing rules of APCP (top) and admissible rules (bottom).}
    \label{f:apcpInf}
\end{figure}

\Cref{f:apcpInf} (top) gives the typing rules.
Typing is closed under structural congruence; we sometimes use this explicitly in typing derivations in the form of a rule `$\equiv$'.
Axiom `\scc{Empty}' types an inactive process with no endpoints.
Rule `$\bullet$' silently adds a closed endpoint to the typing context.
Axiom `\scc{Id}' types forwarding between endpoints of dual type.
Rule `\scc{Mix}' types the parallel composition of two processes that do not share assignments on the same endpoints.
Rule `\scc{Cycle}' removes two endpoints of dual type from the context by adding a restriction on them.
Note that a single application of `\scc{Mix}' followed by `\scc{Cycle}' coincides with the usual rule `\scc{Cut}' in type systems based on linear logic~\cite{conf/concur/CairesP10,conf/icfp/Wadler12}.
Axiom `$\tensor$' types an output action; this rule does not have premises to provide a continuation process, leaving the free endpoints to be bound to a continuation process using `\scc{Mix}' and `\scc{Cycle}'.
Similarly, axiom `$\oplus$' types an unbounded selection action.
Priority checks are confined to rules `$\parr$' and `$\&$', which type an input and a branching prefix, respectively.
In both cases, the used endpoint's priority must be lower than the priorities of the other types in the continuation's typing context.

Rule `\scc{Rec}' types a recursive definition by eliminating a recursion variable from the recursion context whose value concurs with the size of the typing context, where contractiveness is guaranteed by requiring that the eliminated recursion variable may not appear unguarded in each of the context's types.
Axiom `\scc{Var}' types a recursive call by adding a recursion variable to the context with the amount of introduced endpoints.
As mentioned before, the value of the introduced and consequently eliminated recursion variable is crucial in ensuring that a recursion is called with the same amount of channels as required by its definition.

Let us compare our typing system to that of Dardha and Gay~\cite{conf/fossacs/DardhaG18} and DeYoung \etal~\cite{conf/csl/DeYoungCPT12}.
Besides our support for recursion, the main difference is that our rules for output and selection are axioms.
This makes priority checking much simpler for APCP than for Dardha and Gay's PCP: our outputs and selections have no typing context to check priorities against, and types for closed endpoints have no priority at all.
Although DeYoung \etal's output and selection actions are atomic too, their corresponding rules are similar to the rules of Dardha and Gay: the rules require continuation processes as premises, immediately binding the sent endpoints.

As anticipated, the binding of output and selection actions to continuation processes (\Cref{n:sugar}) is derivable in APCP.
The corresponding typing rules in \Cref{f:apcpInf} (bottom) are admissible using `\scc{Mix}' and `\scc{Cycle}'.
Note that it is not necessary to include rules for the sugared input and branching in \Cref{n:sugar}, because they rely on name substitution only and typing is closed under structural congruence and thus name substitution.
\Cref{f:apcpInf} (bottom) also includes an admissible rule `\scc{Lift}' that lifts a process' priorities.

\begin{theorem}\label{t:admissible}
    The rules `$\tensor^\star$', `$\oplus^\star$', and `\scc{Lift}' in \Cref{f:apcpInf} (bottom) are admissible.
\end{theorem}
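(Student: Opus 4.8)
The plan is to treat the three rules separately: $\tensor^\star$ and $\oplus^\star$ by exhibiting an explicit typing derivation using only the rules of \Cref{f:apcpInf} (top), and \scc{Lift} by induction on typing derivations. Throughout I would use the routine fact that typing is preserved when a free endpoint is consistently renamed to a fresh one (so that $P \vdash \Omega; \Gamma, x{:}A$ and $z \notin \fn(P)$ give $P\subst{z/x} \vdash \Omega; \Gamma, z{:}A$).

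For $\tensor^\star$: starting from the premise $P \vdash \Omega; \Gamma, y{:}A, x{:}B$, I would unfold the sugar of \Cref{n:sugar}, namely $\ol{x}[y] \cdot P = \nu{y a}\nu{z b}(x[a,b] \| P\subst{z/x})$ with $a,b,z$ fresh. First apply axiom $\tensor$ to obtain $x[a,b] \vdash \Omega; x{:}A \tensor^\pri B, a{:}\ol{A}, b{:}\ol{B}$; then use the renaming fact to rewrite the premise as $P\subst{z/x} \vdash \Omega; \Gamma, y{:}A, z{:}B$; compose the two with \scc{Mix}; and finally apply \scc{Cycle} twice---on the dual pair $z,b$ (types $B$ and $\ol{B}$) and then on $y,a$ (types $A$ and $\ol{A}$)---to obtain $\ol{x}[y]\cdot P \vdash \Omega; \Gamma, x{:}A\tensor^\pri B$. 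Note no priority side condition arises, matching the intuition that output is non-blocking. The derivation for $\oplus^\star$ follows the same pattern: axiom $\oplus$ (which needs exactly the hypothesis $j \in I$) gives $x[b]\puts j \vdash \Omega; x{:}{\oplus}^\pri\{i:A_i\}_{i \in I}, b{:}\ol{A_j}$; renaming turns the premise into $P\subst{z/x} \vdash \Omega; \Gamma, z{:}A_j$; and \scc{Mix} followed by a single \scc{Cycle} on $z,b$ closes the case.

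For \scc{Lift}, I would induct on the derivation of $P \vdash \Omega; \Gamma$, after first recording a handful of compatibility facts, each a short induction on the structure of types from \Cref{d:lift}, \Cref{d:priority}, and \Cref{d:props}: (i) $\pr(\lift{t}A) = \pr(A) + t$ with the convention $\omega + t = \omega$, and hence $\pr(\lift{t}\Gamma) = \pr(\Gamma) + t$; (ii) $\ol{\lift{t}A} = \lift{t}\ol{A}$; (iii) $\lift{t}\bullet = \bullet$ and $\lift{t}X = X$ (so $A \neq X$ implies $\lift{t}A \neq X$), and $\lift{t}$ does not touch the recursion context $\Omega$; (iv) $\lift{t}$ commutes with the unfolding of recursive types, so it is well-defined on the equi-recursive identification. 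The inductive step is then mostly mechanical: for \scc{Mix} and \scc{Cycle}, distribute $\lift{t}$ over the context and use the IH (plus (ii) for \scc{Cycle}); for the axioms \scc{Empty}, $\bullet$, \scc{Id}, $\tensor$, $\oplus$, \scc{Var}, the claim follows directly from the defining clauses of $\lift{t}$ together with (ii) and (iii); for $\parr$ and $\&$, the IH yields the lifted continuation judgment, the side condition $\pri < \pr(\Gamma)$ becomes $\pri + t < \pr(\lift{t}\Gamma)$ by (i), and $\lift{t}(A\parr^\pri B) = \lift{t}A \parr^{\pri+t} \lift{t}B$ (resp. the $\&$ clause) lets the rule fire again; for \scc{Rec}, the IH plus $\lift{t}\mu X.A = \mu X.\lift{t}A$ plus (iii) for the contractiveness condition finish the case.

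I expect the main obstacle to be bookkeeping rather than conceptual: pinning down facts (i)--(iv) precisely. Fact (i) must handle the absorbing priority $\omega$ carefully, so that closed endpoints and recursion variables retain priority $\omega$ after lifting; and fact (iv) relies on the formalization of recursive-type unfolding that the paper has deliberately postponed---concretely, the statement that lifting commutes with the priority-shifting substitution $A\subst{\lift{s}\mu X.A/X}$. Once these are in place, the case analysis for \scc{Lift} and the two explicit derivations for $\tensor^\star$ and $\oplus^\star$ are routine.
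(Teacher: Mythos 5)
Your proposal is correct and matches the paper's own argument: the paper establishes `$\tensor^\star$' and `$\oplus^\star$' by exactly the derivations you describe (axiom, renaming of $x$ to a fresh $z$ via $\equiv$, \scc{Mix}, then one or two applications of \scc{Cycle}), and handles `\scc{Lift}' by uniformly increasing all priorities in the derivation by $t$, which is precisely the induction you spell out. Your treatment of `\scc{Lift}' is merely more explicit than the paper's one-line appeal to Dardha and Gay, not a different route.
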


\begin{figure}
    \begin{mdframed}
        \mbox{}\hfill%
        \begin{wfit}
            \begin{prooftree}
                \infAss{
                    $P \vdash \Gamma, y{:}A, x{:}B$
                }
                \infUn{
                    $\ol{x}[y] \cdot P \vdash \Gamma, x{:}A \tensor^\pri B$
                }{$\tensor^\star$}
            \end{prooftree}
        \end{wfit}%
        \hfill%
        $\Rightarrow$%
        \hfill%
        \begin{wfit}
            \begin{prooftree}
                \infAx{
                    $x[a, b] \vdash x{:}A \tensor^\pri B, a{:}\ol{A}, b{:}\ol{B}$
                }{$\tensor$}
                \infAss{
                    $P \vdash \Gamma, y{:}A, x{:}B$
                }
                \infUn{
                    $P\subst{z/x} \vdash \Gamma, y{:}A, z{:}B$
                }{$\equiv$}
                \infBin{
                    $x[a, b] \| P\subst{z/x} \vdash \Gamma, x{:}A \tensor^\pri B, y{:}A, a{:}\ol{A}, z{:}B, b{:}\ol{B}$
                }{\scc{Mix}}
                \infDblUn{
                    $\underbrace{\nu{y a} \nu{z b} (x[a, b] \| P\subst{z/x}) }_{\ol{x}[y] \cdot P ~\text{(cf.\ \Cref{n:sugar})}} \vdash \Gamma, x{:}A \tensor^\pri B$
                }{\scc{Cycle}\textsuperscript{2}}
            \end{prooftree}
        \end{wfit}%
        \hfill\mbox{}

        \smallskip
        \mbox{}\hfill%
        \begin{wfit}
            \begin{prooftree}
                \infAss{
                    $P \vdash \Gamma, x{:}A_j$
                }
                \infAss{
                    $j \in I$
                }
                \infBin{
                    $\ol{x} \triangleleft j \cdot P \vdash \Gamma, x{:}{\oplus}^\pri \{i: A_i \}_{i \in I}$
                }{$\oplus^\star$}
            \end{prooftree}
        \end{wfit}%
        \hfill%
        $\Rightarrow$%
        \hfill%
        \begin{wfit}
            \begin{prooftree}
                \infAss{
                    $j \in I$
                }
                \infUn{
                    $x[b] \triangleleft j \vdash x{:}{\oplus}^\pri \{i: A_i \}_{i \in I}, b{:}\ol{A_j}$
                }{$\oplus$}
                \infAss{
                    $P \vdash \Gamma, x{:}A_j$
                }
                \infUn{
                    $P \subst{z/x} \vdash \Gamma, z{:}A_j$
                }{$\equiv$}
                \infBin{
                    $x[b] \triangleleft j \| P \subst{z/x} \vdash \Gamma, x{:}{\oplus}^\pri \{i: A_i \}_{i \in I}, z{:}A_j, b{:}\ol{A_j}$
                }{\scc{Mix}}
                \infUn{
                    $\underbrace{\nu{z b} (x[b] \triangleleft j \| P \subst{z/x})}_{\ol{x} \triangleleft j \cdot P ~\text{(cf.\ \Cref{n:sugar})}} \vdash \Gamma, x{:}{\oplus}^\pri \{i: A_i \}_{i \in I}$
                }{\scc{Cycle}}
            \end{prooftree}
        \end{wfit}%
        \hfill\mbox{}
    \end{mdframed}

    \caption{Proof that rules `$\tensor^\star$' and `$\oplus^\star$' are admissible (cf.\ \Cref{t:admissible}).}
    \label{f:admissible}
\end{figure}

\begin{proof}
    We show the admissibility of rules $\tensor^\star$ and $\oplus^\star$ by giving their derivations in \Cref{f:admissible} (omitting the recursion context).
    The rule `\scc{Lift}' is admissible, because $P \vdash \Omega; \Gamma$ implies $P \vdash \Omega; \lift{t} \Gamma$ (cf.\ Dardha and Gay~\cite{conf/fossacs/DardhaG18}), by simply increasing all priorities in the derivation of $P$ by $t$.
\end{proof}

\noindent
\Cref{t:admissible} highlights how APCP's asynchrony uncovers a more primitive, lower-level view of message-passing.
In the next subsection we discuss deadlock freedom, which follows from a correspondence between reduction and the removal of `\scc{Cycle}' rules from typing derivations.
In the case of APCP, this requires care: binding output and selection actions to continuation processes leads to applications of `\scc{Cycle}' not immediately corresponding to reductions.

\subsection{Type Preservation and Deadlock Freedom}
\label{ss:results}

Well-typed processes satisfy protocol fidelity, communication safety, and deadlock freedom.
All these properties follow from \emph{type preservation} (also known as \emph{subject reduction}), which ensures that reduction preserves typing.
In contrast to Caires and Pfenning~\cite{conf/concur/CairesP10} and Wadler~\cite{conf/icfp/Wadler12}, where type preservation corresponds to the elimination of (top-level) applications of rule \scc{Cut}, in APCP it corresponds to the elimination of (top-level) applications of rule \scc{Cycle}.

\begin{theorem}[Type Preservation]\label{t:subjectRed}
    If \,$P \vdash \Omega; \Gamma$ and $P \redd Q$, then $Q \vdash \Omega; \lift{t}\Gamma$ for $t \in \mbb{N}$.
\end{theorem}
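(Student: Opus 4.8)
The standard route for a type-preservation result of this kind is induction on the derivation of $P \redd Q$, following the structure of the reduction rules in \Cref{f:procdef} (bottom). The base cases are the $\beta$-axioms and $\kappa$-axioms; the inductive cases are the closure rules $\rred{\equiv}$, $\rred{\onu}$, and $\rred{\|}$. Since reduction is defined only up to structural congruence and reductions happen underneath restrictions and parallel composition, I would first establish (or invoke, if available earlier) two auxiliary facts: (i) typing is closed under $\equiv$ --- this is stated in the paragraph on typing rules --- which handles the $\rred{\equiv}$ case essentially for free, modulo a lift; and (ii) a substitution lemma, namely that $P \vdash \Omega; \Gamma, y{:}A$ implies $P\subst{x/y} \vdash \Omega; \Gamma, x{:}A$ (and its obvious multi-name generalization), which is needed to retype the results of every $\beta$-reduction. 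I would also want an inversion lemma extracting, from a typing derivation of a process with a given top-level form, the shape of the types involved; in a linear system with \scc{Mix}/\scc{Cycle} this requires care because the relevant \scc{Cycle} may be separated from the \scc{Mix} by other structural rules and by applications of rule $\bullet$.

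**Handling the $\beta$-axioms.**
For $\brred{\tensor\parr}$: given a typing derivation of $\nu{x y}(x[a,b] \| y(v,z);P)$, inversion forces the last interesting steps to be a \scc{Cycle} on $x{:}A\tensor^\pri B$ and $y{:}\ol{A}\parr^\pri\ol{B}$ (priorities match by duality) over a \scc{Mix}, with $x[a,b]$ typed by axiom $\tensor$ giving $a{:}\ol{A}, b{:}\ol{B}$, and $y(v,z);P$ typed by rule $\parr$ from $P \vdash \Omega; \Gamma, v{:}\ol A, z{:}\ol B$. Applying the substitution lemma twice yields $P\subst{a/v,b/z} \vdash \Omega; \Gamma, a{:}\ol A, b{:}\ol B$, which after reintroducing the ambient context is the goal --- here $t=0$. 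The cases $\brred{\oplus\&}$ (select/branch) and $\brred{\scc{Id}}$ (forwarder) are analogous: for the forwarder, inversion on $\nu{y z}(x\fwd y \| P)$ uses axiom \scc{Id} giving $y{:}\ol A, z{:}A$ against $P$'s typing, and $\beta_{\scc{Id}}$'s side condition $z,y\neq x$ plus the substitution lemma gives $P\subst{x/z}$ with the right type. Note that the *reason* a non-trivial lift $t$ can appear in the statement is precisely the equi-recursive identification of $\mu X.A$ with its (priority-lifted) unfolding: when $\equiv$ unfolds a recursion, or when \scc{Rec} is involved, retyping may legitimately land in $\lift{t}\Gamma$ rather than $\Gamma$. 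So I would phrase the substitution lemma and the inversion lemmas to carry this slack, and in the $\rred\equiv$ case simply appeal to closure of typing under $\equiv$, which already absorbs unfolding, yielding some $t$.

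**Handling the $\kappa$-axioms and the closure rules.**
The commuting conversions $\krred\parr$ and $\krred\&$ are the genuinely APCP-specific part, and I expect the side condition $x \notin \tilde v,\tilde w$ to be exactly what makes them type-preserving: it guarantees that the priority of the pulled-out input/branch on $x$ is unaffected, and, crucially, that the \emph{priority side conditions} $\pri < \pr(\Gamma)$ in rules $\parr$/$\&$ still hold after the restrictions $\nu{\tilde v\tilde w}$ are pushed inside, since those restrictions only remove endpoints (via \scc{Cycle}) from the context against which the priority of $x$ was already checked --- removing endpoints can only increase $\pr(\Gamma)$. Retyping just reorganizes the derivation: pull the \scc{Cycle}s on $\tilde v,\tilde w$ above the $\parr$/$\&$ application; again $t=0$ here. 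Finally, the closure rules $\rred{\onu}$ and $\rred\|$ are straightforward: apply the induction hypothesis to the subderivation, obtaining a lift $t$ on the subcontext, then re-apply \scc{Cycle} (resp.\ \scc{Mix}), using that lifting is compatible with \scc{Cycle} --- one lifts the dual pair $x{:}\lift t A, y{:}\lift t\ol A$ together, which is fine since $\lift t\ol A = \ol{\lift t A}$ --- and with \scc{Mix} (lift the whole context). The main obstacle is getting the inversion lemma right: because \scc{Mix}, \scc{Cycle}, $\bullet$, \scc{Lift}, and $\equiv$ can all intervene between the redex and the root of the derivation, I would need a clean normal-form statement for derivations of a \scc{Cycle}-headed process (e.g.\ "there is a derivation in which the matching \scc{Mix} and \scc{Cycle} are adjacent and sit directly above the subderivations of the two redex components, modulo $\bullet$ and \scc{Lift}"), and proving that normal form is the real work; everything downstream is then bookkeeping with the substitution lemma and the priority monotonicity observations above.
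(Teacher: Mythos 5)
Your overall strategy---induction on the derivation of $P \redd Q$, closure of typing under $\equiv$ for the $\rred{\equiv}$ case, inversion plus a substitution lemma for the $\beta$-axioms with $t=0$, and reassembly via \scc{Mix}/\scc{Cycle} for the closure rules---matches the paper's proof. However, there is a genuine gap in your treatment of the commuting conversions $\krred{\parr}$ and $\krred{\&}$, and it concerns exactly the point the paper singles out as the reason the statement carries a lift $\lift{t}$ at all.

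You claim $t=0$ for these cases, arguing that pushing the restrictions $\nu{\tilde{v}\tilde{w}}$ under the prefix only \emph{removes} endpoints from the context against which the priority $\pri$ of $x$ was checked, so the side condition $\pri < \pr(\cdot)$ survives. This misses the real problem: after the conversion, the input prefix on $x$ guards not only its original continuation $P$ but also the formerly parallel component $Q$. Concretely, if $x(y,z); P \vdash \Omega; \Gamma, \Gamma', x{:}A \parr^\pri B$ (so rule $\parr$ checked $\pri$ only against $P$'s residual context) and $Q \vdash \Omega; \Delta, \Delta'$, then retyping $x(y,z); \nu{\tilde{v}\tilde{w}}(P \| Q)$ requires $\pri < \pr(\Gamma, \Delta)$---and nothing in the original derivation bounds the priorities in $\Delta$ from below by $\pri$; $Q$ may well own a free endpoint whose type has priority $0$. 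The paper repairs this by applying the admissible rule \scc{Lift} with $t = \pri + 1$ to the derivation of $\nu{\tilde{v}\tilde{w}}(P \| Q)$ before reapplying rule $\parr$, which is precisely where the $\lift{t}\Gamma$ in the theorem statement is consumed (the paper notes: ``the lift $\lift{t}$ ensures consistent priority checks''). Your attribution of the nontrivial lift solely to equi-recursive unfolding under $\rred{\equiv}$ therefore misplaces the difficulty; as written, your $\kappa$ cases do not go through.
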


\begin{figure}[t]
    \begin{mdframed}
        {
            \mbox{}\hfill%
            \begin{wfit}
                \begin{prooftree}
                    \infAx{
                        $x[a,b] \vdash x{:}A \tensor^\pri B, a{:}\ol{A}, b{:}\ol{B}$
                    }{$\tensor$}
                    \infAss{
                        $P \vdash \Gamma, v{:}\ol{A}, z{:}\ol{B}$
                    }
                    \infAss{
                    }
                    \infBin{
                        $y(v,z).P \vdash \Gamma, y{:}\ol{A} \parr^\pri \ol{B}$
                    }{$\parr$}
                    \infDblBin{
                        $\nu{x y}(x[a,b] \| y(v,z).P) \vdash \Gamma, a{:}\ol{A}, b{:}\ol{B}$
                    }{$\mkern-10mu\begin{array}{c}
                        \\ \scc{Mix}\,+ \\ \scc{Cycle}
                    \end{array}$}
                \end{prooftree}
            \end{wfit}%
            \hfill%
            $\redd$%
            \hfill%
            \begin{wfit}
                \begin{prooftree}
                    \infAss{
                        $P \vdash \Gamma, v{:}\ol{A}, z{:}\ol{B}$
                    }
                    \infUn{
                        $P \subst{a/v,b/z} \vdash \Gamma, a{:}\ol{A}, b{:}\ol{B}$
                    }{$\equiv$}
                \end{prooftree}
            \end{wfit}%
            \hfill\mbox{}%

            \medskip
            \hbox to \textwidth{\leaders\hbox to 3pt{\hss . \hss}\hfil}

            \smallskip
            Below, the contexts $\Gamma'$ and $\Delta'$ together contain $\tilde{v}$ and $\tilde{w}$, i.e.\ $\Gamma', \Delta' = {(v_i{:}C_i)}_{v_i \in \tilde{v}}, {(w_i{:}\ol{C_i})}_{w_i \in \tilde{w}}$.

            \begin{prooftree}
                \infAss{
                    $P \vdash \Gamma, \Gamma', y{:}A, z{:}B$
                }
                \infAss{
                    $\pri < \pr(\Gamma)$
                }
                \infBin{
                    $x(y,z).P \vdash \Gamma, \Gamma', x{:}A \parr^\pri B$
                }{$\parr$}
                \infAss{
                    $Q \vdash \Delta, \Delta'$
                }
                \infDblBin{
                    $\nu{\tilde{v} \tilde{w}}(x(y,z).P \| Q) \vdash \Gamma, \Delta, x{:}A \parr^\pri B$
                }{$\scc{Mix}+\scc{Cycle}^\ast$}
            \end{prooftree}

            \mbox{}\hfill%
            $\redd$%
            \hfill\mbox{}

            \begin{prooftree}
                \infAss{
                    $P \vdash \Gamma, \Gamma', y{:}A, z{:}B$
                }
                \infAss{
                    $Q \vdash \Delta, \Delta'$
                }
                \infDblBin{
                    $\nu{\tilde{v} \tilde{w}}(P \| Q) \vdash \Gamma, \Delta, y{:}A, z{:}B$
                }{$\scc{Mix}+\scc{Cycle}^\ast$}
                \infUn{
                    $\nu{\tilde{v} \tilde{w}}(P \| Q) \vdash \lift{\pri+1} \Gamma, \lift{\pri+1} \Delta, y{:}\lift{\pri+1} A, z{:}\lift{\pri+1} B$
                }{\scc{Lift}}
                \infAss{
                    $\pri < \pr(\lift{\pri + 1} \Gamma, \lift{\pri+1} \Delta)$
                }
                \infBin{
                    $x(y,z).\nu{\tilde{v} \tilde{w}}(P \| Q) \vdash \lift{\pri+1} \Gamma, \lift{\pri+1} \Delta, x{:}(\lift{\pri+1} A) \parr^\pri (\lift{\pri+1} B)$
                }{$\parr$}
            \end{prooftree}
        }%
    \end{mdframed}

    \caption{Type Preservation (cf.\ \Cref{t:subjectRed}) in rules $\protect\brred{\protect\tensor\protect\parr}$ (top) and $\protect\krred{\protect\parr}$ (bottom).}
    \label{f:subRed}
\end{figure}

\begin{proof}
    By induction on the reduction $\redd$, analyzing the last applied rule (\figref{f:procdef} (bottom)).
    The cases of the closure rules $\rred{\equiv}$, $\rred{\onu}$, and $\rred{\|}$ easily follow from the IH.
    The key cases are the $\beta$- and $\kappa$-rules.
    \Cref{f:subRed} shows two representative instances (eluding the recursion context $\Omega$): rule $\brred{\tensor\parr}$ (top), a synchronization, and rule $\krred{\parr}$ (bottom), a commuting conversion.
    Note how, in the case of rule $\kappa_{\parr}$, the lift $\lift{t}$ ensures consistent priority checks.
\end{proof}

Protocol fidelity ensures that processes respect their intended (session) protocols.
Communication safety ensures the absence of communication errors and mismatches in processes.
Correct typability gives a static guarantee that a process conforms to its ascribed session protocols; type preservation gives a dynamic guarantee.
Because session types describe the intended protocols and error-free exchanges, type preservation entails both protocol fidelity and communication safety.
We refer the curious reader to the early work by Honda \etal~\cite{conf/esop/HondaVK98} for a detailed account, which shows by contradiction that well-typed processes do not reduce to so-called error processes.
This is a well-known and well-understood result.

In what follows, we consider a process to be deadlocked if it is not the inactive process and cannot reduce.
Our deadlock freedom result for APCP adapts that for PCP~\cite{conf/fossacs/DardhaG18}, which involves three steps:
\begin{enumerate}
    \item
        First, \scc{Cycle}-elimination states that we can remove all applications of \scc{Cycle} in a typing derivation without affecting the derivation's assumptions and conclusion.

    \item
        Only the removal of \emph{top-level} \scc{Cycle}s captures the intended process semantics, as the removal of other \scc{Cycle}s corresponds to reductions behind prefixes which is not allowed~\cite{conf/icfp/Wadler12,conf/fossacs/DardhaG18}.
        Therefore, the second step is \emph{top-level deadlock freedom}, which states that a process with a top-level \scc{Cycle} reduces until there are no top-level \scc{Cycle}s left.

    \item
        Third, deadlock freedom follows for processes typable under empty contexts.
\end{enumerate}
Here, we address cycle-elimination and top-level deadlock-freedom in one proof.

As mentioned before, binding APCP's asynchronous outputs and selections to continuations involves additional, low-level uses of \scc{Cycle}, which we cannot eliminate through process reduction.
Therefore, we establish top-level deadlock freedom for \emph{live processes} (\Cref{t:dlFree}).
A process is live if it is equivalent to a restriction on \emph{active names} that perform unguarded actions.
This way, e.g., in `$x[y,z]$' the name $x$ is active, but $y$ and $z$ are not.

\begin{definition}[Active Names]\label{d:an}
    The \emph{set of active names} of $P$, denoted `$\an(P)\mkern-2mu$', contains the (free) names that are used for unguarded actions (output, input, selection, branching):

    \vspace{-2em}
    \begin{align*}
        \an(x[y,z])
        & := \{x\}
        &
        \an(x(y,z).P)
        & := \{x\}
        &
        \an(\0)
        & := \emptyset
        \\
        \an(x[z] \triangleleft j)
        & := \{x\}
        &
        \an(x(z) \triangleright \{i: P_i\}_{i \in I})
        & := \{x\}
        &
        \an(x \fwd y)
        & := \{x,y\}
        \\
        \an(P \| Q)
        & := \an(P) \cup \an(Q)
        &
        \an(\mu X(\tilde{x}); P)
        & := \an(P)
        \\
        \an(\nu{x y}P)
        & := \an(P) \setminus \{x,y\}
        &
        \an(X\call{\tilde{x}})
        & := \emptyset
    \end{align*}
\end{definition}

\begin{definition}[Live Process]\label{d:live}
    A process $P$ is \emph{live}, denoted `$\mkern1mu\live(P)\mkern-2mu$', if there are names $x,y$ and process $P'$ such that $P \equiv \nu{x y}P'$ with $x,y \in \an(P')$.
\end{definition}

We additionally need to account for recursion: as recursive definitions do not entail reductions, we must fully unfold them before eliminating \scc{Cycle}s.

\begin{lemma}[Unfolding]\label{l:unfold}
    If $P \vdash \Omega; \Gamma$, then there is process $P^\star$ such that $P^\star \equiv P$ and $P^\star$ is not of the form `$\mu X(\tilde{x}); Q$' and $P^\star \vdash \Omega; \Gamma$.
\end{lemma}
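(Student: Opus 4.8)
The plan is to strip the leading recursive binders off $P$ one at a time, using the equi-recursive axiom of structural congruence, and to invoke closure of typing under $\equiv$ (noted in \Cref{ss:typesys}) to carry the typing $\Omega; \Gamma$ over to the result.

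If $P$ is not of the form $\mu X(\tilde x); Q$ we are done, taking $P^\star := P$. Otherwise, after $\alpha$-renaming (legitimate since $\equiv$ subsumes $\equiv_\alpha$), write $P$ as $\mu X_1(\tilde x_1); \cdots; \mu X_k(\tilde x_k); C$ with $k \geq 1$, the recursion variables $X_1, \dots, X_k$ pairwise distinct, and the core $C$ not of the form $\mu Y(\tilde y); R$. I would then argue by induction on $k$ that $P$ is congruent to a process with no leading $\mu$-binder.

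For the inductive step, unfold the outermost loop: $P \equiv Q_0 \rsubst{\mu X_1(\tilde z); Q_0\subst{\tilde z/\tilde x_1} / X_1\call{\tilde z}} =: P'$, where $Q_0 = \mu X_2(\tilde x_2); \cdots; \mu X_k(\tilde x_k); C$. The unfolding substitution only rewrites subterms of shape $X_1\call{\cdot}$, and the leading binders $\mu X_2, \dots, \mu X_k$ of $Q_0$ are not such subterms (here distinctness of the $X_i$ matters), so $P' = \mu X_2(\tilde x_2); \cdots; \mu X_k(\tilde x_k); C'$ with $C' := C\rsubst{\cdots}$. It remains to see that $C'$ is not $\mu$-headed: $C$ is not $\mu$-headed by construction, and the substitution can only produce a $\mu$-headed term from $C$ if $C$ itself is a call $X_1\call{\cdot}$ --- but then $P$ would contain the subexpression $\mu X_1(\tilde x_1); \cdots; \mu X_k(\tilde x_k); X_1\call{\cdot}$, which is precisely the pattern excluded by contractiveness. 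In every remaining case the head of $C$ is untouched, so $C'$ is not $\mu$-headed and $P'$ has exactly $k-1$ leading $\mu$-binders; since $P' \vdash \Omega; \Gamma$ (closure under $\equiv$) and $P'$ is again contractive (unfolding preserves contractiveness), the induction hypothesis gives $P^\star \equiv P' \equiv P$ with $P^\star \vdash \Omega; \Gamma$ not of the form $\mu X(\tilde x); Q$.

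I expect the bookkeeping around contractiveness to be the main obstacle: one must check that the contractiveness side-condition genuinely rules out the single diverging pattern $\mu X_1(\tilde x_1); \cdots; \mu X_k(\tilde x_k); X_1\call{\cdot}$, since otherwise peeling one $\mu$ can re-create --- and even multiply --- leading $\mu$'s and the induction collapses; and one must be careful that the initial $\alpha$-renaming really removes all shadowing among $X_1, \dots, X_k$, so that the unfolding substitution for $X_1$ leaves $\mu X_2, \dots, \mu X_k$ untouched.
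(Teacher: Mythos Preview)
Your proof is correct and follows the same inductive scheme as the paper's: induct on the number of leading $\mu$-binders, unfold the outermost one via the equi-recursive congruence axiom, and use contractiveness to ensure the count strictly decreases. The one substantive difference is how you transfer typing to the unfolded process: you simply invoke the stated closure of typing under $\equiv$, whereas the paper constructs the derivation for the unfolded body explicitly---replacing each \scc{Var}-axiom on the unfolded variable with a copy of the original derivation and applying \scc{Lift} so that the priority side-conditions in the copy remain satisfied in the surrounding context. Your route is shorter and is legitimate given that closure under $\equiv$ is asserted in \Cref{ss:typesys}; the paper's route is more self-contained in that it spells out exactly why the unfolding instance of that closure holds, the non-trivial content being the priority bookkeeping via \scc{Lift}.
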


\begin{figure}[t]
    \begin{mdframed}
        \begin{prooftree}
            \infAx{
                $X\call{{(y_i)}_{i \in I}} \vdash \Omega', X{:}|I|; {(y_i{:}X)}_{i \in I}$
            }{\scc{Var}} \noLine
            \def\extraVskip{-2.5pt}
            \UnaryInfC{
                $\raisebox{2pt}{$\vdots$}$
            }
            \infAss{
                $\raisebox{2pt}{$\iddots$}$
            }
            \def\extraVskip{2pt}
            \infDblBin{
                $Q \vdash \Omega, X{:}|I|; {(x_i{:}A_i)}_{i \in I}$
            }{\ldots}
            \infUn{
                $\mu X({(x_i)}_{i \in I}); Q \vdash \Omega; {(x_i{:}\mu X.A_i)}_{i \in I}$
            }{\scc{Rec}}
        \end{prooftree}

        \medskip
        \hbox to \textwidth{\leaders\hbox to 3pt{\hss . \hss}\hfil}

        \begin{prooftree}
            \infAx{
                $X\call{{(y_i)}_{i \in I}} \vdash \Omega'', X{:}|I|; {(y_i{:}X)}_{i \in I}$
            }{\scc{Var}} \noLine
            \def\extraVskip{-2.5pt}
            \UnaryInfC{
                $\raisebox{2pt}{$\vdots$}$
            }
            \infAss{
                $\raisebox{2pt}{$\iddots$}$
            }
            \def\extraVskip{2pt}
            \infDblBin{
                $Q \vdash \Omega', X{:}|I|; {(x_i{:}A_i)}_{i \in I}$
            }{\ldots}
            \infUn{
                $Q \subst{{(y_i)}_{i \in I}/{(x_i)}_{i \in I}} \vdash \Omega', X{:}|I|; {(y_i{:}A_i)}_{i \in I}$
            }{$\equiv$}
            \infUn{
                $\mu X({(y_i)}_{i \in I}); Q \subst{{(y_i)}_{i \in I}/{(x_i)}_{i \in I}} \vdash \Omega'; {(y_i{:}\mu X.A_i)}_{i \in I}$
            }{\scc{Rec}}
            \infAss{
                $t \geq \max_{\sff{pr}}{(A_i)}_{i \in I}$
            }
            \infBin{
                $\mu X({(y_i)}_{i \in I}); Q \subst{{(y_i)}_{i \in I}/{(x_i)}_{i \in I}} \vdash \Omega'; {(y_i{:}\lift{t} \mu X.A_i)}_{i \in I}$
            }{\scc{Lift}} \noLine
            \def\extraVskip{-2.5pt}
            \UnaryInfC{
                $\raisebox{2pt}{$\vdots$}$
            }
            \infAss{
                $\raisebox{2pt}{$\iddots$}$
            }
            \def\extraVskip{2pt}
            \infDblBin{
                $R \vdash \Omega; {(x_i{:}A_i\subst{\lift{t} \mu X.A_i/X})}_{i \in I}$
            }{\ldots}
        \end{prooftree}
    \end{mdframed}

    \caption{Typing recursion before (top) and after (bottom) unfolding (cf.\ \Cref{l:unfold}).}
    \label{f:unfold}
\end{figure}

\begin{proof}
    By induction on the amount $n$ of consecutive recursive definitions prefixing $P$, such that $P$ is of the form `$\mkern-1mu\mu X_1(\tilde{x}); \ldots; \mu X_n(\tilde{x}); Q$'.
    If $n = 0$, the thesis follows immediately by letting $P^\star := P$.

    Otherwise, $n \geq 1$.
    Then there are $X,Q$ such that $P = \mu X({(x_i)}_{i \in I}); Q$.
    By inversion of typing rule \scc{Rec}, $P \vdash \Omega; {(x_i{:}\mu X.A_i)}_{i \in I}$.
    Generally speaking, such typing derivations have the shape as in \Cref{f:unfold} (top), with zero or more \scc{Var}-axioms on $X$ appearing at the top.
    We use structural congruence (\figref{f:procdef} (middle)) to unfold the recursion in $P$, obtaining the process $R := Q \rsubst{\mu X({(y_i)}_{i \in I}); Q \subst{{(y_i)}_{i \in I} / {(x_i)}_{i \in I}} / X\call{{(y_i)}_{i \in I}}} \equiv P$.

    We can type $R$ by taking the derivation of $P$ (cf.\ \Cref{f:unfold} (top)), removing the final application of the \scc{Rec}-rule and replacing any uses of the \scc{Var}-axiom on $X$ by a copy of the original derivation, applying $\alpha$-conversion where necessary.
    Moreover, we lift the priorities of all types by at least the highest priority occurring in any type in $\Gamma$ using the \scc{Lift}-rule, ensuring that priority conditions on typing rules remain valid; we explicitly use \emph{at least} the highest priority, as the context of connected endpoints may lift the priorities in dual types even more.
    Writing the highest priority in $\Gamma$ as `$\max_{\sff{pr}}(\Gamma)$', the resulting proof is of the shape in \Cref{f:unfold} (bottom).
    Since types are equi-recursive, $A_i\subst{\lift{t} \mu X.A_i/X} = A_i$ for every $i \in I$.
    Hence, ${(y_i{:}A_i\subst{\lift{t} \mu X.A_i / X})}_{i \in I} = \Gamma$.
    Thus, the above is a valid derivation of $R \vdash \Omega; \Gamma$.

    The rules applied after \scc{Lift} in the derivation of $R$ in \Cref{f:unfold} (bottom) are the same as those applied after \scc{Var} and before \scc{Rec} in the derivation of $P$ in \Cref{f:unfold} (top) before unfolding.
    By the assumption that recursion is contractive, there must be an application of a rule other than \scc{Rec} in this part of the derivation.
    Therefore, the application of \scc{Rec} in the derivation of $R$ is not part of a possible sequence of \scc{Rec}s in the last-applied rules of this derivation.
    Hence, since we removed the final application of \scc{Rec} in the derivation of $P$, the size of this sequence of \scc{Rec}s is $n-1$, i.e.\ $R$ is prefixed by $n-1$ recursive definitions.
    Thus, we apply the IH to find a process $P^\star$ not prefixed by recursive definitions s.t.\ $P^\star \equiv R \equiv P \vdash \Omega; \Gamma$.
\end{proof}

Dardha and Gay's top-level deadlock freedom result concerns a sequence of reduction steps that reaches a process that is not live anymore~\cite{conf/fossacs/DardhaG18}.
In our case, top-level deadlock freedom concerns a single reduction step only, because recursive processes might stay live across reductions forever.

\begin{theorem}[Top-Level Deadlock Freedom]\label{t:dlFree}
    If $P \vdash \emptyset; \Gamma$ and $\live(P)$, then there is process $Q$ such that $P \redd Q$.
\end{theorem}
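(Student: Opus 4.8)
The plan is to proceed by induction on the structure of a suitably normalized representative of $P$, reducing the problem to a case analysis on the active names witnessing liveness. By \Cref{l:unfold}, I may assume $P$ is not of the form $\mu X(\tilde{x}); Q$; combining this with \Cref{d:live}, I may write $P \equiv \nu{x y} P'$ with $x, y \in \an(P')$. Pushing the restriction as far inward as structural congruence allows (scope extrusion/inclusion, commutativity and associativity of $\|$), I would bring $P$ into a canonical form in which $x$ and $y$ sit at the top of two parallel components whose outermost constructs are the actions on $x$ and on $y$ — except that asynchronous outputs and selections, being non-prefixes, may be nested under further restrictions coming from the derivable-action sugar of \Cref{n:sugar}; these I would expose using $\kappa_{\parr}$ and $\kappa_{\&}$ (and scope extrusion) so that the two communicating actions become visibly juxtaposed under $\nu{x y}$.

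The core of the argument is then a case analysis on the pair of action shapes at $x$ and $y$. Since $x, y$ are dual endpoints of the same channel, typing (\Cref{f:apcpInf}, together with \Cref{d:props}) forces their outermost connectives to be dual, so the possible pairings are: output/input (fires $\beta_{\tensor\parr}$), selection/branching (fires $\beta_{\oplus\&}$), and forwarder against anything (fires $\beta_{\scc{Id}}$, or the structural rule $\nu{x y} x \fwd y \equiv \0$ if both ends of the forwarder are the restricted pair). In each case I exhibit the reduct $Q$ explicitly and invoke the corresponding $\beta$-rule closed under $\rred{\onu}$, $\rred{\|}$, and $\rred{\equiv}$. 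The one genuinely delicate point is guaranteeing that the two actions can actually be brought adjacent — i.e.\ that neither is blocked behind an input or branching prefix on some \emph{other} channel. This is where the priority discipline does the work: if, say, the action on $x$ were guarded by an input on $w$ with priority $\opri$, then rule $\parr$ (or $\&$) would force $\opri < \pr(x)$, and tracing the chain of such guards across the (finitely many) restrictions in $P'$ would produce a strictly descending, hence eventually-terminating, sequence of priorities — ruling out a cyclic wait. Concretely I would formalize this as: among all active-name pairs witnessing liveness, pick one, $\{x,y\}$, whose action carries the least priority; then no guarding input/branching can precede it, since such a guard would have still-smaller priority and would itself (by liveness of the whole configuration and acyclicity of the connection graph enforced by \scc{Cycle}/\scc{Mix}) expose another restricted dual pair of even lower priority, contradicting minimality.

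I expect the main obstacle to be precisely this minimal-priority selection and the bookkeeping around the commuting conversions: one must argue that applying $\kappa_{\parr}$ and $\kappa_{\&}$ to pull guarding prefixes outward, together with the structural-congruence manipulations, preserves both typability (immediate from \Cref{t:subjectRed} and its proof, since $\kappa$-rules are reductions) and liveness, and terminates — which follows because each such rewrite strictly decreases the nesting depth of restrictions under prefixes, a well-founded measure. A secondary subtlety is the forwarder case: a forwarder $x \fwd y$ is itself an active occurrence of both $x$ and $y$, so if the witnessing pair is exactly $\{x,y\}$ we use the congruence $\nu{x y} x \fwd y \equiv \0$ only when the surrounding process is inactive; otherwise we have $\nu{y z}(x \fwd y \| R)$ and apply $\beta_{\scc{Id}}$. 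Once the case analysis is complete, the reduct $Q$ is produced in every case, establishing $P \redd Q$.
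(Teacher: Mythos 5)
Your proposal is correct and follows essentially the same route as the paper's proof: bring $P$ into a canonical form by structural congruence, unfold recursive definitions via \Cref{l:unfold}, select a restricted pair of active names of least priority, and case-analyse the (necessarily dual) action shapes to fire $\beta_{\scc{Id}}$, $\beta_{\tensor\parr}$, or $\beta_{\oplus\&}$. The one structural divergence is that the paper's proof has an additional first case in which the witnessing reduction is a \emph{commuting conversion} $\kappa_{\parr}$ on a free name of $\Gamma$ whose input type has strictly lower priority than every restricted active pair; your proof always produces a synchronization. This omission is harmless: since the witnessing pair is active by \Cref{d:an,d:live}, both actions are already unguarded and sit in separate parallel components, so the synchronization fires regardless of what happens on free names. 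Two small technical points: the $\kappa$-rules move input/branching \emph{prefixes} out of restrictions and are themselves reductions (so if you ever apply one during your ``normalization'' you are already done); exposing an output or selection nested under the restrictions introduced by the sugar of \Cref{n:sugar} needs only scope extrusion, not $\kappa$-rules. Finally, your treatment of the degenerate forwarder case $\nu{x y}\,x \fwd y$ (appealing to the congruence with $\0$ rather than exhibiting a reduction) does not actually produce a $Q$ with $P \redd Q$ — but the paper's own proof glosses over exactly the same corner, so this is a shared blemish rather than a gap specific to your argument.
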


\begin{proof}
    By structural congruence (\figref{f:procdef} (middle)), there is $P_c = \nu{x_i y_i}_{i \in I} \nu{\tilde{n} \tilde{m}} P_m$ such that $P_c \equiv P$, with $P_m = \prod_{k \in K} P_k$ and $P_m \vdash \emptyset; \Lambda, {(x_i{:}A_i, y_i{:}\ol{A_i})}_{i \in I}$ s.t.\ for every $i \in I$, $x_i$ and $y_i$ are active names in $P_m$, and $\Lambda$ consists of $\Gamma$ and the channels $\tilde{n},\tilde{m}$ which are dually typed pairs of endpoints of which at least one is inactive in $P_m$.
    Because $P$ is live, there is always at least one pair $x_i,y_i$.

    Next, we take the $j \in I$ s.t.\ $A_j$ has the least priority, i.e.\ $\forall i \in I \setminus \{j\}.~\pr(A_j) \leq \pr(A_i)$.
    If there are multiple to choose from, any suffices.
    The rest of the analysis depends on whether there is an endpoint $z$ of input/branching type in $\Gamma$ with lower priority than $\pr(A_j)$.
    We thus distinguish the two cases below.
    Note that output/selection types in $\Gamma$ are associated with non-blocking actions and can be safely ignored.
    \begin{itemize}
        \item
            If there is such $z$, assume w.l.o.g.\ it is of input type.
            The input on $z$ cannot be prefixed by an input/branch on another endpoint, because then that other endpoint would have a type with lower priority than $z$.
            Hence, there is $k' \in K$ s.t.\ $P_{k'} = z(u,v); P'_{k'}$.
            We thus apply communicating conversion $\krred{\parr}$ to find $Q$ such that $P \redd Q$:
            \[
                P \equiv \nu{x_i y_i}_{i \in I} \nu{\tilde{n} \tilde{m}} (\prod_{k \in K \setminus \{k'\}} P_k \| z(u,v); P'_{k'}) \redd z(u,v); \nu{x_i y_i}_{i \in I} \nu{\tilde{n} \tilde{m}} (\prod_{k \in K \setminus \{k'\}} P_k \| P'_{k'}) = Q
            \]

        \item
            If there is no such $z$, we continue with $x_j{:}A_j$ and $y_j{:}\ol{A_j}$.
            In case there is $k' \in K$ s.t.\ $P_k \equiv u \fwd v$ with $u \in \{x_j,y_j\}$, the reduction is trivial by $\rred{\scc{Id}}$; we w.l.o.g.\ assume there is no such $k'$.

            By duality, $A_j$ and $\ol{A_j}$ have the same priority, so priority checks in typing derivations prevent an input/branching prefix on $x_j$ (resp.\ $y_j$) from blocking an output/selection on $y_j$ (resp.\ $x_j$).
            Hence, $x_j$ and $y_j$ appear in separate parallel components of $P_m$, i.e.\ $P_m = P_{x_j} \| P_{y_j} \| P_R$ s.t.\
            \[
                P_{x_j} \vdash \emptyset; \Lambda_{x_j}, x_j{:}A_j ~,
                \quad
                P_{y_j} \vdash \emptyset; \Lambda_{y_j}, y_j{:}\ol{A_j} ~\text{, and}
                \quad
                P_R \vdash \emptyset; \Lambda_R ~,
            \]
            where $\Lambda_{x_j}, \Lambda_{y_j}, \Lambda_R, x_j{:}A_j, y_j{:}\ol{A_j} = \Lambda, {(x_i{:}A_i, y_i{:}\ol{A_i})}_{i \in I}$.

            By \Cref{l:unfold} (unfolding), $P_{x_j} \equiv P_{x_j}^\star$ and $P_{y_j} \equiv P_{y_j}^\star$ s.t.\ $P_{x_j}^\star$ and $P_{y_j}^\star$ are not prefixed by recursive definitions and $P_{x_j}^\star \vdash \emptyset; \Lambda_{x_j}, x_j{:}A_j$ and $P_{y_j}^\star \vdash \emptyset; \Lambda_{y_j}, y_j{:}\ol{A_j}$.
            We take the unfolded form of $A_j$: by the contractiveness of recursive types, $A_j$ has at least one connective.
            We w.l.o.g.\ assume that $A_j$ is an input or branching type, i.e.\ either (a) $A_j = B \parr^\pri C$ or (b) $A_j = \&^\pri \{l:B_l\}_{l \in L}$.

            Since $\pr(A_j) = \pri$ is the least of the priorities in $\Gamma$, we know that either (in case a) $P_{x_j}^\star \equiv x_j(v,z).Q_{x_j}$ or (in case b) $P_{x_j}^\star \equiv x_j(z) \triangleright \{l:Q_{x_j}^l\}_{l \in L}$.
            Moreover, since either (in case a) $\ol{A_j} = \ol{B} \tensor^\pri \ol{C}$ or (in case b) $\ol{A_j} = {\oplus}^\pri \{l:\ol{B_l}\}_{l \in L}$, we have that either (in case a) $P_{y_j}^\star \equiv y_j[a,b] \| Q_{y_j}$ or (in case b) $P_{y_j}^\star \equiv y_j[b] \triangleleft l^\star \| Q_{y_j}$ for $l^\star \in L$.
            In case (a), let $Q'_{x_j} := Q_{x_j} \subst{a/v,b/z}$; in case (b), let $Q'_{x_j} := Q_{x_j}^{l^\star} \subst{b/z}$.
            Then, (in case a) by reduction $\brred{\tensor\parr}$ or (in case b) by reduction $\brred{\oplus\&}$,
            \begin{align*}
                P \equiv \nu{x_i y_i}_i \nu{\tilde{n} \tilde{m}} (P_{x_j}^\star \| P_{y_j}^\star \| P_R)
                & \equiv \nu{x_i y_i}_{i \setminus j} \nu{\tilde{n} \tilde{m}} (\nu{x_j y_j}(P_{x_j}^\star \| P_{y_j}^\star) \| P_R)
                \\
                &
                \redd \nu{x_i y_i}_{i \setminus j} \nu{\tilde{n} \tilde{m}} (Q'_{x_j} \| Q_{y_j} \| P_R).
                \tag*{\qedhere}
            \end{align*}
    \end{itemize}
\end{proof}

Our deadlock freedom result concerns processes typable under empty contexts (as in, e.g., Caires and Pfenning~\cite{conf/concur/CairesP10} and Dardha and Gay~\cite{conf/fossacs/DardhaG18}).
This way, the reduction guaranteed by \Cref{t:dlFree} corresponds to a synchronization ($\beta$-rule), rather than a commuting conversion ($\kappa$-rule).
We first need a lemma which ensures that non-live processes typable under empty contexts do not contain actions or prefixes.

\begin{lemma}\label{l:nLivenAction}
    If $P \vdash \emptyset; \emptyset$ and $P$ is not live, then $P$ contains no actions or prefixes whatsoever.
\end{lemma}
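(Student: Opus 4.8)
The plan is to prove the contrapositive, in the form: if $P \vdash \emptyset;\emptyset$ and $P$ contains at least one action or prefix, then $\live(P)$. This is a ``progress''-style statement, and the argument follows the shape of the proof of \Cref{t:dlFree}. Throughout I would fix one typing derivation of $P$, so that ``the type --- and hence the priority --- of an endpoint at a given occurrence'' is unambiguous, unaffected by equi-recursive unfolding or by the admissible rule \scc{Lift}. As a first step I normalize $P$: using \Cref{l:unfold} to strip leading recursive definitions (applied also to the parallel components that appear) together with structural congruence (scope extrusion, associativity and commutativity of $\|$, reordering of restrictions), rewrite $P \equiv \nu{\tilde{a}\tilde{b}}\bigl(\prod_{k \in K} P_k\bigr)$ with each $P_k$ neither a restriction, a parallel composition, nor a recursive definition --- hence each $P_k$ is $\0$, a forwarder, an action, or a prefix. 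Since $\fn(P) = \emptyset$, every name of every $P_k$ is bound within $\tilde{a},\tilde{b}$; and because outputs and selections carry no continuation, every recursive definition remaining in $P$ sits underneath some input/branching $P_k$, so every action or prefix occurring under a $\mu$ is guarded.

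For the core step, among all occurrences of actions and prefixes in $P$ --- all of them on $\tensor$/$\parr$/$\oplus$/$\&$-typed, hence $\bullet$-free, endpoints with priorities in $\mathbb{N}$ --- I choose one whose priority is least, say on endpoint $x$ of type $A$, and set $\pi_0 := \pr(A)$ (a minimum of a finite set). This occurrence cannot be guarded by a prefix: by the side conditions of rules $\parr$ and $\&$, any dominating input/branching prefix on an endpoint $z$ would have $\pr(z\text{'s type}) < \pr(A) = \pi_0$, contradicting minimality. Hence the chosen occurrence is one of the top-level $P_k$ (it cannot lie under a prefix, nor --- by the normalization --- under a recursive definition); its subject $x$ is therefore an active name, and since $\fn(P) = \emptyset$ and $x$ is not bound by a prefix binder (else the occurrence would be guarded), $x$ is bound by some $\nu{xy}$ in $\tilde{a},\tilde{b}$. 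Reordering restrictions gives $P \equiv \nu{xy}P''$ with $x \in \an(P'')$.

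It remains to exhibit an active dual pair inside some restriction. The partner $y$ has type $\ol{A}$ with $\pr(\ol{A}) = \pi_0$; since $\ol{A} \neq \bullet$, linearity forces $y$ to occur free in $P''$. If that occurrence is an unguarded action, prefix, or forwarder with subject $y$, then $y \in \an(P'')$ and $P \equiv \nu{xy}P''$ witnesses $\live(P)$. If $y$'s occurrence, or a forwarder in which it appears, is guarded, the side condition of the innermost guarding prefix again produces a priority strictly below $\pi_0$ --- impossible. The remaining case is that $y$ is \emph{delegated}, i.e.\ sent as a message by an unguarded output or selection, whose subject $g$ is thus active with a type that contains $\ol{A}$'s dual as a component; then $g$'s partner $\ol{g}$ cannot be the active subject of an input/branching (that would once more force a priority below $\pi_0$), so $\ol{g}$ is either an active forwarder endpoint --- giving the pair $\nu{g\ol{g}}$ and hence $\live(P)$ --- or itself delegated, and one iterates. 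As there are finitely many channels, this ``sends-to'' chain must eventually reach an active dual pair: a cycle is incompatible with the contractive (tail-)recursive types of \Cref{d:props}, since each delegation step exposes a fresh outer connective on the sender's type. This delegation case is exactly where the difficulty lies --- it is the genuine analogue of the case split in the proof of \Cref{t:dlFree} --- whereas the \scc{Mix}-compositionality of $\live(\cdot)$ (using $\fn = \emptyset$ and scope extrusion), the linearity fact that non-$\bullet$ context entries are used, and the stability of the priority bookkeeping under equi-recursive unfolding and \scc{Lift}, are all routine.
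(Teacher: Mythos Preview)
Your overall strategy---prove the contrapositive, normalize, pick an action of least priority $\pi_0$, and argue that both endpoints of its restriction are active---is close in spirit to the paper's proof, which instead follows a descending-priority chain starting from an arbitrary action. You go further than the paper by explicitly isolating the delegation case (where the partner $y$ occurs only as an object of an unguarded output), a case the paper's proof quietly folds into ``not prefixed $\Rightarrow$ active.''

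That said, your handling of the delegation case has a genuine gap. First, a minor confusion: the claim that ``$\ol{g}$ cannot be the active subject of an input/branching'' is backwards. If $\ol{g}$ is an unguarded input subject it \emph{is} active, and since $g$ is already active the restriction $\nu{g\ol{g}}$ immediately witnesses $\live(P)$---this is a success case, not one to exclude. The more serious issue is the case you omit: $\ol{g}$ occurring \emph{guarded} by some prefix on $w$. Your argument that $y$'s occurrence cannot be guarded used $\pr(y)=\pi_0$: any guard would have priority strictly below $\pi_0$, contradicting minimality. But $\pr(\ol{g})$ may strictly exceed $\pi_0$, so a guard $w$ with $\pr(w)<\pr(\ol{g})$ need not satisfy $\pr(w)<\pi_0$, and your chain stalls. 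Relatedly, your termination argument for the delegation chain (``a cycle is incompatible with tail-recursive types'') does not obviously rule out outputs of the shape $g[y,\ol{g}]$, which are typable with $g:A\tensor^\pri(\mu X.\,A\tensor^\pri X)$ and make the chain loop on $g$. The paper's descending-chain argument is what handles the guarded case: from a guarded partner, shift attention to the guard $w$ (strictly smaller priority) and its restriction partner, and iterate; finiteness of the term forces termination at an unguarded pair. To repair your proof you would need to interleave your delegation chain with this priority-descent, which essentially recovers the paper's argument.
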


 \begin{proof}
    Suppose, for contradiction, that $P$ does contain actions or prefixes.
    For example, $P$ contains some subterm $x(y,z); P'$.
    Because $P \vdash \emptyset; \emptyset$, there must be a restriction on $x$ in $P$ binding it with, e.g., $x'$.
    Now, $x'$ does not appear in $P'$, because the type of $x$ in the derivation of $P \vdash \emptyset; \emptyset$ must be lower than the types of the endpoints in $P'$, and by duality the types of $x$ and $x'$ have equal priority.
    Hence, there is some $Q$ s.t. $P \equiv \nu{\tilde{u}\tilde{v}}\nu{xx'}(x(y,z); P' \| Q)$ where $x' \in \fn(Q)$.
    There are two cases for the appearance of $x'$ in $Q$: (1) not prefixed, or (2) prefixed.
    \begin{itemize}
        \item
            In case (1), $x' \in \an(Q)$, so the restriction on $x,x'$ in $P$ is on a pair of active names, contradicting the fact that $P$ is not live.

        \item
            In case (2), $x'$ appears in $Q$ behind at least one prefix.
            For example, $Q$ contains some subterm $a(b,c); Q'$ where $x' \in \fn(Q')$.
            Again, $a$ must be bound in $P$ to, e.g., $a'$.
            Through similar reasoning as above, we know that $a'$ does not appear in $Q'$.
            Moreover, the type of $a$ must have lower priority than the type of $x'$, so by duality the type of $a'$ must have lower priority than the type of $x$.
            So, $a'$ also does not appear in $P'$.
            Hence, there is $R$ s.t. $P \equiv \nu{\tilde{u}\tilde{v}}\nu{aa'}\nu{xx'}(x(y,z); P' \| a(b,c); Q' \mid R)$ where $a' \in \fn(R)$.

            Now, the case split on whether $a'$ appears prefixed in $R$ or not repeats, possibly finding new names that prefix the current name again and again following case (2).
            However, process terms are finite in size, so we know that at some point there cannot be an additional parallel component in $P$ to bind the new name, contradicting the existence of the newly found prefix.
            Hence, eventually case (1) will be reached, uncovering a restriction on a pair of active names and contradicting the fact that $P$ is not live.
    \end{itemize}
    In conclusion, the assumption that there are actions or prefixes in $P$ leads to a contradiction.
    Hence, $P$ contains no actions or prefixes whatsoever.
\end{proof}

We now state our deadlock freedom result:

\begin{theorem}[Deadlock Freedom]\label{t:closedDLFree}
    If $P \vdash \emptyset; \emptyset$, then either $P \equiv \0$ or $P \redd_\beta Q\mkern2mu$ for some $Q$.
\end{theorem}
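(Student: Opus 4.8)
The plan is to case-split on whether $P$ is live (\Cref{d:live}); the two cases yield the two disjuncts of the conclusion.

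Suppose first that $\live(P)$. Then Top-Level Deadlock Freedom (\Cref{t:dlFree}), instantiated with $\Gamma = \emptyset$, gives a process $Q$ with $P \redd Q$; what remains is to see that the witnessing step is a synchronization. Revisiting the proof of \Cref{t:dlFree}: the only branch that fires a commuting conversion $\krred{\parr}$ is the one in which the typing context contains an endpoint $z$ of blocking (input or branching) type whose priority is strictly below that of the selected least-priority session; in every other branch the step is one of $\brred{\scc{Id}}$, $\brred{\tensor\parr}$, $\brred{\oplus\&}$. Since the context here is $\emptyset$, no such $z$ exists, so the argument necessarily lands in a synchronizing branch and delivers $P \redd_\beta Q$.

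Suppose instead that $P$ is not live. I would first apply \Cref{l:nLivenAction} to conclude that $P$ has no actions or prefixes. If $P$ contains a forwarder $x \fwd y$, then both $x$ and $y$ are bound in $P$ (there are no free names, as $\Gamma = \emptyset$). If they are bound by a common restriction, scope extrusion gives $P \equiv \nu{x y}P'$ with $x, y \in \an(P')$, contradicting $\neg\live(P)$; if they are bound by distinct restrictions, the restriction on, say, $y$ exposes a $\brred{\scc{Id}}$-redex, so $P \redd_\beta Q$. Otherwise $P$ is built only from $\0$, parallel composition, restriction, recursive loops, and recursive calls, and a structural induction shows $P \equiv \0$: a recursive loop typable under the empty context binds no endpoints and, by contractiveness together with the absence of prefixes, contains no recursive call, so it unfolds (equi-recursively) to its body; each component of a parallel composition is $\equiv \0$ by the inductive hypothesis, and $\0 \| \0 \equiv \0$; and in $\nu{x y}P'$ the endpoints $x, y$ can only have been introduced by rule $\bullet$, so $P' \vdash \emptyset; \emptyset$ and $P' \equiv \0$ by the inductive hypothesis, whence $\nu{x y}\0 \equiv \0$. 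Thus $P \equiv \0$.

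The delicate part is, as usual, the non-live case: \Cref{l:nLivenAction} removes all actions and prefixes, but one still has to check that no inert residue — vacuous recursion, nested restrictions on closed endpoints, forwarders — can persist in a process typed under empty contexts, which is a matter of chasing the relevant typing inversions and normalizing with scope extrusion/inclusion and equi-recursion. The live case is essentially a corollary of \Cref{t:dlFree}, once one observes that its commuting-conversion branch cannot arise when the ambient context is empty; in particular, Type Preservation (\Cref{t:subjectRed}) is not needed here, as only a single reduction step is asserted.
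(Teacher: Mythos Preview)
Your proof is correct and follows the same route as the paper: case-split on liveness, invoke \Cref{t:dlFree} in the live case (observing that the commuting-conversion branch needs a free endpoint in $\Gamma$, which is empty), and invoke \Cref{l:nLivenAction} in the non-live case to reduce to $\0$ via structural congruence. Your explicit treatment of forwarders is a small refinement the paper glosses over; otherwise the arguments coincide.
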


\begin{proof}
    The analysis depends on whether $P$ is live or not.
    \begin{itemize}
        \item
            If $P$ is not live, then, by \Cref{l:nLivenAction}, it does not contain any actions or prefixes.
            Any recursive loops in $P$ are thus of the form `$\mu X_1(); \ldots; \mu X_n(); \0$': contractiveness requires recursive calls to be prefixed by inputs/branches or bound to parallel outputs/selections, of which there are none.
            Hence, we can use structural congruence to rewrite each recursive loop in $P$ to $\0$ by unfolding, yielding $P' \equiv P$.
            The remaining derivation of $P'$ only contains applications of \scc{Empty}, \scc{Mix}, $\bullet$, or \scc{Cycle} on closed endpoints.
            It follows easily that $P \equiv P' \equiv \0$.

        \item
            If $P$ is live, by \Cref{t:dlFree} there is $Q$ s.t.\ $P \redd Q$.
            Moreover, $P$ does not have free names, for otherwise it would not be typable under empty context.
            Because commuting conversions apply only to free names, this means $P \redd_\beta Q$.
            \qedhere
    \end{itemize}
\end{proof}

\subsection{Explicit Closing and Replicated Servers}
\label{ss:exrules}

\begin{figure}[t]
    \begin{mdframed}
        \mbox{}\hfill%
        \begin{wfit}
            \begin{prooftree}
                \infAx{
                    $x[] \vdash \Omega; x{:}\1^\pri$
                }{$\1$}
            \end{prooftree}
        \end{wfit}%
        \hfill%
        \begin{wfit}
            \begin{prooftree}
                \infAss{
                    $P \vdash \Omega; \Gamma$
                }
                \infAss{
                    $\pri < \pr(\Gamma)$
                }
                \infBin{
                    $x(); P \vdash \Omega; \Gamma, x{:}\bot^\pri$
                }{$\bot$}
            \end{prooftree}
        \end{wfit}%
        \hfill\mbox{}

        \smallskip
        \mbox{}\hfill%
        \begin{wfit}
            \begin{prooftree}
                \infAx{
                    ${?}x[y] \vdash \Omega; x{:}{?}^\pri A, y{:}\ol{A}$
                }{${?}$}
            \end{prooftree}
        \end{wfit}%
        \hfill%
        \begin{wfit}
            \begin{prooftree}
                \infAss{
                    $P \vdash \Omega; {?}\Gamma, y{:}A$
                }
                \infAss{
                    $\pri < \pr({?}\Gamma)$
                }
                \infBin{
                    ${!}x(y); P \vdash \Omega; {?}\Gamma, x{:}{!}^\pri A$
                }{${!}$}
            \end{prooftree}
        \end{wfit}%
        \hfill\mbox{}

        \smallskip
        \mbox{}\hfill%
        \begin{wfit}
            \begin{prooftree}
                \infAss{
                    $P \vdash \Omega; \Gamma$
                }
                \infUn{
                    $P \vdash \Omega; \Gamma, x{:}{?}^\pri A$
                }{\scc{W}}
            \end{prooftree}
        \end{wfit}%
        \hfill%
        \begin{wfit}
            \begin{prooftree}
                \infAss{
                    $P \vdash \Omega; \Gamma, x{:}{?}^\pri A, x'{:}{?}^\kappa A$
                }
                \infAss{
                    $\pi = \min(\pri,\kappa)$
                }
                \infBin{
                    $P \subst{x/x'} \vdash \Omega; \Gamma, x{:}{?}^\pi A$
                }{\scc{C}}
            \end{prooftree}
        \end{wfit}%
        \hfill%
        \begin{wfit}
            \begin{prooftree}
                \infAss{
                    $P \vdash \Omega; \Gamma, y{:}A$
                }
                \infUn{
                    ${?}\ol{x}[y] \cdot P \vdash \Omega; \Gamma, x{:}{?}^\pri A$
                }{${?}^\star$}
            \end{prooftree}
        \end{wfit}%
        \hfill\mbox{}


        \medskip
        \hbox to \textwidth{\leaders\hbox to 3pt{\hss . \hss}\hfil}

        \vspace{-1.5em}
        \begin{align*}
            & \brred{\1\bot}
            &
            &
            &
            \nu{x y}(x[] \| y(); P)
            & \redd P
            \\
            & \brred{{?}{!}}
            &
            &
            &
            \nu{x y}({?}x[a] \| {!}y(v); P \| Q)
            & \redd P \subst{a/v} \| \nu{x y}({!}y(v); P \| Q)
            \\
            & \krred{\bot}
            &
            x \notin \tilde{v},\tilde{w} \implies
            &
            &
            \nu{\tilde{v}\tilde{w}}(x(); P \| Q)
            & \redd x(); \nu{\tilde{v}\tilde{w}}(P \| Q)
            \\
            & \krred{{!}}
            &
            x \notin \tilde{v},\tilde{w} \implies
            &
            &
            \nu{\tilde{v}\tilde{w}}({!}x(y); P \| Q)
            & \redd {!}x(y); \nu{\tilde{v}\tilde{w}}(P \| Q)
        \end{align*}
    \end{mdframed}

    \caption{Typing rules for explicit closing and replicated servers.}
    \label{f:exrules}
\end{figure}

As already mentioned, our presentation of APCP does not include explicit closing and replicated servers.
We briefly discuss what APCP would look like if we were to include these constructs.

We achieve explicit closing by adding empty outputs `$x[]$' and empty inputs `$x(); P$' to the syntax of \Cref{f:procdef} (top).
We also add the synchronization `$\brred{\1\bot}$' and the commuting conversion `$\krred{\bot}$' in \Cref{f:exrules} (bottom).
At the level of types, we replace the conflated type `$\bullet$' with `$\1^\pri$' and `$\bot^\pri$', associated to empty outputs and empty inputs, respectively.
Note that we do need priority annotations on types for closed endpoints now, because the empty input is blocking and thus requires priority checks.
In the type system of \Cref{f:apcpInf} (top), we replace rule `$\bullet$' with the rules `$\1$' and `$\bot$' in \Cref{f:exrules} (top).

For replicated servers, we add client requests `${?}x[y]$' and servers `${!}x(y); P$', typed `${?}^\pri A$' and `${!}^\pri A$', respectively.
We include syntactic sugar for binding client requests to continuations as in \Cref{n:sugar}: `${?}\ol{x}[y] \cdot P := \nu{y a}({?}x[a] \| P)$'.
New reduction rules are in \Cref{f:exrules} (bottom): synchronization rule `$\brred{{?}{!}}$', connecting a client and a server and spawns a copy of the server, and commuting conversion `$\krred{{!}}$'.
Also, we add a structural congruence axiom to clean up unused servers: $\nu{x z}({!}x(y); P) \equiv \0$.
In the type system, we add rules `${?}$', `${!}$', `\scc{W}' and `\scc{C}' in \Cref{f:exrules} (top); the former two are for typing client requests and servers, respectively, and the latter two are for connecting to a server without requests and for multiple requests, respectively.
In rule `${!}$', notation `${?}\Gamma$' means that every type in $\Gamma$ is of the form `${?}^\pri A$'.
\Cref{f:exrules} (top) also includes an admissible rule `${?}^\star$' which types the syntactic sugar for bound client requests.

\section{Examples}
\label{s:examples}

Up to here, we have presented our process language and its type system, and we have discussed the influence of asynchrony and recursion in their design and properties ensured by typing. We now present examples to further illustrate the design and expressiveness of APCP.

\subsection{Milner's Typed Cyclic Scheduler}
\label{ss:milnertyped}

\begin{figure}[t]
    \begin{mdframed}
        \begin{prooftree}
            \infAx{
                $\vdash X{:}3; a_1{:}X, c_n{:}X, d_1{:}X$
            }{\scc{Var}}
            \infUn{
                $\vdash X{:}3; a_1{:}X, c_n{:}\&^{\rho_n} \{\sff{next}: X\}, d_1{:}X$
            }{$\&$}
            \infUn{
                $\vdash X{:}3; a_1{:}X, c_n{:}\&^{\pi_n} \{\sff{start}: \&^{\rho_n} \{\sff{next}: X\}\}, d_1{:}X$
            }{$\&$}
            \infUn{
                $\vdash X{:}3; a_1{:}X, c_n{:}\&^{\pi_n} \{\sff{start}: \&^{\rho_n} \{\sff{next}: X\}\}, d_1{:}{\oplus}^{\rho_1} \{\sff{next}: X\}$
            }{$\oplus^\star$}
            \infUn{
                $\vdash X{:}3; a_1{:}\&^{\kappa_1} \{\sff{ack}: X\}, c_n{:}\&^{\pi_n} \{\sff{start}: \&^{\rho_n} \{\sff{next}: X\}\}, d_1{:}{\oplus}^{\rho_1} \{\sff{next}: X\}$
            }{$\&$}
            \infUn{
                $\vdash X{:}3; a_1{:}{\oplus}^{\pri_1} \{\sff{start}: \&^{\kappa_1} \{\sff{ack}: X\}\}, c_n{:}\&^{\pi_n} \{\sff{start}: \&^{\rho_n} \{\sff{next}: X\}\}, d_1{:}{\oplus}^{\rho_1} \{\sff{next}: X\}$
            }{$\oplus^\star$}
            \infUn{
                $\vdash X{:}3;~ a_1{:}{\oplus}^{\pri_1} \{\sff{start}: \&^{\kappa_1} \{\sff{ack}: X\}\},
                \begin{array}[t]{l}
                    c_n{:}\&^{\pi_n} \{\sff{start}: \&^{\rho_n} \{\sff{next}: X\}\},
                    \\
                    d_1{:}{\oplus}^{\pi_1} \{\sff{start}: {\oplus}^{\rho_1} \{\sff{next}: X\}\}
                \end{array}$
            }{$\oplus^\star$}
            \infUn{
                $\vdash \emptyset;~ a_1{:}\mu X.{\oplus}^{\pri_1} \{\sff{start}: \&^{\kappa_1} \{\sff{ack}: X\}\},
                \begin{array}[t]{l}
                    c_n{:}\mu X.\&^{\pi_n} \{\sff{start}: \&^{\rho_n} \{\sff{next}: X\}\},
                    \\
                    d_1{:}\mu X.\&^{\pi_1} \{\sff{start}: {\oplus}^{\rho_1} \{\sff{next}: X\}\}
                \end{array}$
            }{\scc{Rec}}
        \end{prooftree}
    \end{mdframed}

    \caption{Typing derivation of the leader scheduler $A_1$ of Milner's cyclic scheduler (processes omitted).}
    \label{f:leader}
\end{figure}

To consider a process that goes beyond the scope of PCP, here we show that our specification of Milner's cyclic scheduler from \Cref{s:milner} is typable in APCP, and thus deadlock free (cf.\ \Cref{t:closedDLFree}).
Let us recall the process definitions of the leader and followers, omitting braces `$\{\ldots\}$' for branches with one option:
\begin{align*}
    A_1
    &:= \mu X(a_1,c_n,d_1); d_1 \puts \sff{start} \cdot a_1 \puts \sff{start} \cdot a_1 \gets \sff{ack}; d_1 \puts \sff{next} \cdot c_n \gets \sff{start}; c_n \gets \sff{next}; X\call{a_1,c_n,d_1}
    \\
    A_{i+1}
    &:= \mu X(a_{i+1},c_i,d_{i+1}); \begin{array}[t]{l}
        c_i \gets \sff{start}; a_{i+1} \puts \sff{start} \cdot d_{i+1} \puts \sff{start} \cdot a_{i+1} \gets \sff{ack};
        \\
        c_i \gets \sff{next}; d_{i+1} \puts \sff{next} \cdot X\call{a_{i+1},c_i,d_{i+1}}
    \end{array}
    \qquad \forall 1 \leq i < n
\end{align*}
Each process $A_{i+1}$ for $0 \leq i < n$---thus including the leader---is typable as follows, assuming $c_i$ is $c_n$ for $i=0$ (see \figref{f:leader} for the derivation of $A_1$, omitting processes from judgments):
\begin{align*}
    A_{i+1}
    &\vdash \emptyset;~ a_{i+1}{:}\mu X.{\oplus}^{\pri_{i+1}} \{\sff{start}: \&^{\kappa_{i+1}} \{\sff{ack}: X\}\},
    \begin{array}[t]{l}
        c_i{:}\mu X.\&^{\pi_i} \{\sff{start}: \&^{\rho_i} \{\sff{next}: X\}\},
        \\
        d_{i+1}{:}\mu X.{\oplus}^{\pi_{i+1}} \{\sff{start}: {\oplus}^{\rho_{i+1}} \{\sff{next}: X\}\}
    \end{array}
\end{align*}
Note how, for each $1 \leq i \leq n$, the types for $c_i$ and $d_i$ are duals.

To verify these typing derivations, we need to assign values to the priorities $\pri_i,\kappa_i,\pi_i,\rho_i$ for each $1 \leq i \leq n$ that satisfy the necessary requirements.
From the derivation of $A_1$ we require $\kappa_1 < \rho_1, \pi_n$.
For each $1 \leq i < n$, from the derivation of $A_{i+1}$ we require $\rho_i < \rho_{i+1}$ and $\kappa_{i+1} < \rho_i, \rho_{i+1}$ and $\pi_i < \pri_{i+1}, \pi_{i+1}$.
We can easily satisfy these requirements by assigning $\pri_i := \kappa_i := \pi_i := i$ and $\rho_i := i+2$ for each $1 \leq i \leq n$.

Assuming that $P_i \vdash \emptyset; a_i{:}\mu X.\&^{\pri_i} \{\sff{start}: {\oplus}^{\kappa_i} \{\sff{ack}: X\}\}$ for each $1 \leq i \leq n$, we have $Sched_n \vdash \emptyset; \emptyset$.
Hence, it follows from \Cref{t:closedDLFree} that $Sched_n$ is deadlock free for each $n \geq 1$.

\subsection{Comparison to Padovani's Type System for Deadlock Freedom}
\label{ss:padovani}

Padovani's type system for deadlock freedom~\cite{conf/csl/Padovani14} simplifies a type system by Kobayashi~\cite{conf/concur/Kobayashi06}; both these works do not consider session types.
Just as for Dardha and Gay's PCP~\cite{conf/fossacs/DardhaG18}, the priority annotations of APCP are based on similar annotations in Padovani's and Kobayashi's type systems.
Here, we compare APCP to these type systems by discussing some of the examples in Padovani's work.

\paragraph{Ring of Processes}

To illustrate APCP's flexible support for recursion, we consider Padovani's ever-growing ring of processes~\cite[Ex.~3.8]{conf/csl/Padovani14}.
For the ring to continuously loop, Padovani uses self-replicating processes.
Although this exact method is not possible in APCP, we can use recursion instead:
\[
    Ring_x^y := \mu X(x, y); \nu{a a'}(x(z); \nu{b b'}(X\call{z,b} \| X\call{b',a}) \| \ol{y}[c] \cdot a' \fwd c) \vdash \emptyset; x{:}\mu X.X \parr^\pri \bullet, y{:}\mu X.X \tensor^\kappa \bullet
\]
Each iteration, this process receives a fresh endpoint from its left neighbor and sends another fresh endpoint to its right neighbor.
It then spawns two copies of itself, connected to each other on a fresh channel, and one connected to the left neighbor and the other to the right neighbor.
There are no priority requirements, so we can let $\pri = \kappa$.
We can then connect the initial copy of $Ring$ to itself, forming a deadlock free ring of processes that doubles in size at every iteration (cf.\ \Cref{t:closedDLFree}):
\[
    \begin{array}{c}
        \nu{x y}Ring_x^y
        \redd^3 \nu{x_1 y_1}\nu{x_2 y_2}(Ring_{x_1}^{y_2} \| Ring_{x_2}^{y_1})
        \\[4pt]
        \redd^6 \nu{x_1 y_1}\nu{x_2 y_2}\nu{x_3 y_3}\nu{x_4 y_4}(Ring_{x_1}^{y_2} \| Ring_{x_2}^{y_3} \| Ring_{x_3}^{y_4} \| Ring_{x_4}^{y_1})
        \redd^{12} \ldots
    \end{array}
\]

\paragraph{Blocking versus Non-blocking}

Padovani discusses the significance of blocking inputs versus non-blocking outputs~\cite[Exs.~2.2 \&~3.6]{conf/csl/Padovani14}.
Although we can express Padovani's example in APCP with minor modifications, we can do so more directly by including replication as in \Cref{ss:exrules}.
Consider the following processes, which are identical up to the order of input and output:
\[
    {Node}_A := {!}c_A(c); c(x); c(y); \ol{x}[a] \cdot y(z); \0
    \hspace{3em}
    {Node}_B := {!}c_B(c); c(x); c(y); y(z); \ol{x}[a] \cdot \0
\]
We consider several configurations of nodes, using the syntactic sugar $\ol{x}\<y\> \cdot P := \ol{x}[y'] \cdot (y \fwd y' \| P)$:
\begin{align*}
    L_1(X)
    &:= \nu{c_A c'_A}\nu{c_B c'_B}({Node}_A \| {Node}_B \| {?}\ol{c'_X}[c] \cdot \nu{e e'}(\ol{c}\<e\> \cdot \ol{c}\<e'\> \cdot \0))
    \\
    L_2(X,Y)
    &:= \nu{c_A c'_A}\nu{c_B c'_B}({Node}_A \| {Node}_B \| \nu{e e'}\nu{f f'}\left(\begin{array}{l}
            {?}\ol{c'_X}[c] \cdot \ol{c}\<e\> \cdot \ol{c}\<f\> \cdot \0
            \\
            \|\, {?}\ol{c'_Y}[c'] \cdot \ol{c'}\<f'\> \cdot \ol{c'}\<e'\> \cdot \0
    \end{array}\right))
    \\
    L_3(X,Y,Z)
    &:= \nu{c_A c'_A}\nu{c_B c'_B}({Node}_A \| {Node}_B \| \nu{e e'}\nu{f f'}\nu{g g'}\left(\begin{array}{l}
            {?}\ol{c'_X}[c] \cdot \ol{c}\<e\> \cdot \ol{c}\<f\> \cdot \0
            \\
            \|\, {?}\ol{c'_Y}[c'] \cdot \ol{c'}\<g\> \cdot \ol{c'}\<e'\> \cdot \0
            \\
            \|\, {?}\ol{c'_Z}[c''] \cdot \ol{c''}\<f'\> \cdot \ol{c''}\<g'\> \cdot \0
    \end{array}\right))
\end{align*}
where $X,Y,Z \in \{A,B\}$.

To illustrate the significance of APCP's asynchrony, let us consider how $L_2(A,A)$ reduces:
\[
    L_2(A, A) \redd^6 \nu{e e'}\nu{f f'}(\ol{e}[a] \cdot f(z); \0 \| \ol{f'}[a'] \cdot e'(z'); \0) \redd \nu{f f'}(f(z); \0 \| \ol{f'}[a'] \cdot \0) \redd \0.
\]
The synchronization on $e$ and $e'$ is possible because the output on $f'$ is non-blocking.
It is also possible for $f$ and $f'$ to synchronize first, because the output on $e$ is also non-blocking.
In contrast, the reduction of $L_2(B,B)$ illustrates the blocking behavior of inputs:
\[
    L_2(B,B) \redd^6 \nu{e e'}\nu{f f'}(f(z); \ol{e}[a] \cdot \0 \| e'(z'); \ol{f'}[a'] \cdot \0) \nredd.
\]
This results in deadlock, for each node awaits a message, blocking their output to the other node.

Let us show how APCP detects (freedom of) deadlocks in each of these configurations by considering priority requirements.
For $X \in \{A,B\}$, we have ${Node}_X \vdash \emptyset; c_X{:}{!}^\pri ((\bullet \tensor^{\kappa_X} \bullet) \parr^{\pi_X} (\bullet \parr^{\rho_X} \bullet) \parr^{\psi_X} \bullet)$, requiring $\rho_B < \kappa_B$ and $\pi_X,\psi_X < \kappa_X,\rho_X$.
In each configuration, the input endpoint of one node is connected to the output endpoint of another.
Duality thus requires that $\kappa_W = \rho_{W'}$ for $W,W' \in \{X,Y,Z\}$.
Hence, in any configuration, if the input endpoint of a $Node_B$ is connected the output endpoint of another $Node_B$, we require $\kappa_B = \rho_B$, violating the requirement that $\rho_B < \kappa_B$.
From this we can conclude that the above configurations are deadlock free if and only if at least one of $X,Y,Z$ is $A$, and at most one of them is $B$.
This verifies that $L_2(A,A)$ contains no deadlock, while $L_2(B,B)$ does.

Note that in PCP the conditions for deadlock freedom are much stricter, as PCP's blocking outputs additionally require that $\kappa_A < \rho_A$.
Hence, we also cannot connect the input of a $Node_A$ to the output of another $Node_A$.
This means that $L_2(A,B)$ and $L_2(B,A)$ are the only deadlock free configurations in PCP.

\section{Related Work \& Conclusion}
\label{s:concl}

We have already discussed several related works throughout the paper~\cite{conf/fossacs/DardhaG18,conf/csl/DeYoungCPT12,conf/concur/Kobayashi06,conf/csl/Padovani14}.
The work of Kobayashi and Laneve~\cite{DBLP:journals/iandc/0001L17} is related to APCP in that it addresses deadlock freedom for \emph{unbounded} process networks.
Another related approach is Toninho and Yoshida's~\cite{journal/toplas/ToninhoY18}, which addresses deadlock freedom for cyclic process networks by generating global types from binary types.
The work by Balzer \etal~\cite{conf/icfp/BalzerP17,conf/esop/BalzerTP19} is also worth mentioning: it guarantees deadlock freedom for processes with shared, mutable resources by means of manifest sharing, i.e.\ explicitly acquiring and releasing access to resources.
Finally, Pruiksma and Pfenning's session type system derived from adjoint logic~\cite{conf/places/PruiksmaP19,journal/jlamp/PruiksmaP20} treats asynchronous, non-blocking actions via axiomatic typing rules, similarly as we do (cf.\ axioms `$\tensor$' and `$\oplus$' in \Cref{f:apcpInf}); we leave a precise comparison with their approach for future work.

In this paper, we have presented APCP, a type system for deadlock freedom of cyclic process networks with asynchronous communication and recursion.
We have shown that, when compared to (the synchronous) PCP~\cite{conf/fossacs/DardhaG18}, asynchrony in APCP significantly simplifies the management of priorities required to detect cyclic dependencies (cf.\ the discussion at the end of \Cref{ss:padovani}).
We illustrated the expressivity of APCP using multiple examples, and concluded that it is comparable in expressivity to similar type systems not based on session types or logic, in particular the one by Padovani~\cite{conf/csl/Padovani14}.
More in-depth comparisons with this and the related type systems cited above would be much desirable.
Finally, in ongoing work we are applying APCP to the analysis of multiparty protocols implemented as processes~\cite{report/vdHeuvelP21}.

\paragraph{Acknowledgements}
We are grateful to the anonymous reviewers for their careful reading of our paper and their useful feedback.
We also thank Ornela Dardha for clarifying the typing rules of PCP to us.

\phantomsection
\addcontentsline{toc}{section}{References}
\bibliographystyle{eptcs}
\bibliography{refs}

\end{document}